\theoremstyle{definition}
\newtheorem{property}{Property}
\newtheorem{definition}{Definition}
\newtheorem{thm}{Theorem}
\newtheorem{lem}{Lemma}
\newtheorem{cor}{Corollary}
\newtheoremstyle{exmp}
  {\topsep}
  {\topsep}
  {}
  {}
  {\bfseries}
  {.}
  {.5em}
  {\thmname{#1}\thmnumber{ #2}\thmnote{ (#3)}}
\theoremstyle{exmp}
\newtheorem{exmp}{Example}
\newcommand\NoDo{\renewcommand\algorithmicdo{}}
\newcommand\NoThen{\renewcommand\algorithmicthen{}}
\newcommand{\mynote}[2]{
    \fbox{\bfseries\sffamily\scriptsize#1}
    {\small$\blacktriangleright$\textsf{\emph{#2}}$\blacktriangleleft$}}}
\newcommand{\mynote}[2]{}}
\newcommand{\gs}[1]{\mynote{GS}{#1}}
\newcommand\blfootnote[1]{
  \begingroup
  \renewcommand\thefootnote{}\footnote{#1}
  \addtocounter{footnote}{-1}
  \endgroup
}
\begin{document}

\title{How Hard is Asynchronous Weight Reassignment?
{\Large (Extended Version)}
}

\author{}
\author{
\IEEEauthorblockN{
    Hasan Heydari,\IEEEauthorrefmark{1}
    Guthemberg Silvestre,\IEEEauthorrefmark{2} and 
    Alysson Bessani\IEEEauthorrefmark{1}} 

    \IEEEauthorblockA{\IEEEauthorrefmark{1}LASIGE, Faculdade de Ciências, Universidade de Lisboa, Portugal}
    \IEEEauthorblockA{\IEEEauthorrefmark{2}ENAC, University of Toulouse, France}

    hheydari@ciencias.ulisboa.pt, silvestre@enac.fr, anbessani@ciencias.ulisboa.pt
}

\maketitle

\begin{abstract}
The performance of distributed storage systems deployed on wide-area networks can be improved using weighted (majority) quorum systems instead of their regular variants due to the heterogeneous performance of the nodes. 
A significant limitation of weighted majority quorum systems lies in their dependence on static weights, which are inappropriate for systems subject to the dynamic nature of networked environments. 
To overcome this limitation, such quorum systems require mechanisms for reassigning weights over time according to the performance variations. 
We study the problem of node weight reassignment in asynchronous systems with a static set of servers and static fault threshold. We prove that solving such a problem is as hard as solving consensus, i.e., it cannot be implemented in asynchronous failure-prone distributed systems.    
This result is somewhat counter-intuitive, given the recent results showing that two related problems -- replica set reconfiguration and asset transfer -- can be solved in asynchronous systems.
Inspired by these problems, we present two versions of the problem that contain restrictions on the weights of servers and the way they are reassigned.
We propose a protocol to implement one of the restricted problems in asynchronous systems. 
As a case study, we construct a dynamic-weighted atomic storage based on such a protocol.
We also discuss the relationship between weight reassignment and asset transfer problems and compare our dynamic-weighted atomic storage with reconfigurable atomic storage.
\end{abstract}

\blfootnote{This is the extended version of a paper to appear at the 43rd IEEE International Conference on Distributed Computing Systems (ICDCS 2023).}

\begin{IEEEkeywords}
distributed storage, weighted replication, atomic storage, asset transfer, reconfiguration, consensus
\end{IEEEkeywords}

\section{Introduction}
In the era of cloud computing, cryptocurrencies, and the internet of everything, distributed storage systems are required more than ever due to their fault tolerance and high availability.
Ensuring consistency of distributed storage systems is a fundamental challenging problem in distributed computing.
One well-known solution for such a problem is utilizing quorum systems~\cite{quorumSystems}.
A quorum system is a collection of sets called \textit{quorums} such that each quorum is a subset of servers, and every two quorums \textit{intersect}.
Although many types of quorum systems exist, such as grids \cite{loadCapacityAvaiQS} and trees~\cite{treeQS}, most practical distributed storage systems (e.g.,~\cite{etcd, zookeeper, fab, gpfs}) utilize the regular majority quorum system ({\small MQS}) due to its simplicity and optimal fault tolerance.

In {\small MQS}, every quorum consists of a strict majority of servers.
Although {\small MQS} is simple and optimally fault-tolerant, it might be subject to poor latency and low throughput due to practical considerations such as replica heterogeneity~\cite{readWriteQS}.
To take such considerations into account, one can use the weighted majority quorum system ({\small WMQS}), in which each server is assigned a weight (a.k.a. vote or voting power) in accordance with its access latency or request processing capacity (throughput), as determined by a monitoring system~\cite{reliabilityVoting, aware}, and the assigned weights are used to determine whether a subset of servers constitutes a (weighted) quorum.

A significant limitation of the {\small WMQS} is its reliance on static weights, which are inappropriate for dynamic systems, where servers' performance might change over time~\cite{aware, hheydari}. \gs{it is not sure if it is a limitation of the previous work or if it has already been considered previously (eventually by unfruitful attempts), in this case, not completely new. I mean, this sentence could be rephrased to highlight the main novelty of this work.}
To overcome such a limitation, {\small WMQS} can be integrated with weight reassignment protocols for changing server weights over time according to performance variations.

The main goal of this paper is to study weight reassignment in an asynchronous system with a static set of servers and static fault threshold, where an available weighted quorum is guaranteed to exist.
To this end, as a first step, we formally define the \textit{weight reassignment} problem by which weight reassignment requests can be issued and processed.
We then prove that consensus can be reduced to the weight reassignment problem, i.e., a solution to the weight reassignment problem can be used to solve consensus.
Consequently, \emph{the weight reassignment problem cannot be implemented in asynchronous failure-prone systems}.

To cope with such an impossibility, we introduce a restricted version of weight reassignment called \textit{pairwise weight reassignment}, in which the reassignments can only be done in a pairwise way.
More precisely, in the pairwise weight reassignment, the total weight of servers remains constant, and a server gains a weight $\Delta$ if and only if another server loses $\Delta$.
Reassigning weights in such a way is similar to transferring assets in $1$-asset transfer.\footnote{In the $1$-asset transfer problem, there are some accounts, each of which is owned by a server; each server can transfer some of its assets to another server if its balance does not become negative.}
Somewhat surprisingly, although $1$-asset transfer can be implemented in asynchronous failure-prone systems~\cite{cnCrypto}, we show that this is not the case for pairwise weight reassignment.

We further restrict the pairwise weight reassignment problem by mainly considering a restriction on the possible range of weights, naming it \textit{restricted pairwise weight reassignment}.
We show that such a restricted variant of the problem can be implemented in asynchronous failure-prone systems.
As a case study, we construct a dynamic-weighted atomic storage incorporating a protocol solving this variant.
    
Our dynamic-weighted atomic storage is somewhat similar to reconfigurable atomic storage,\footnote{Reconfigurable atomic storage implements atomic storage in systems with the possibility of changing the set of servers over time.} which can be implemented in asynchronous systems~\cite{dynAtomicStorageWithoutCons, effModConsensus-freeFST, jehl2017case, smartMerge, spiegelman2017dynamic}.
We further elaborate on similarities between these storage systems, discussing why the techniques used to implement reconfigurable atomic storage, e.g., generalized lattice agreement~\cite{faleiro2012generalized}, cannot be used to implement dynamic-weighted atomic storage.
    
\vspace{0.5em}
\noindent\textbf{Contributions}. The contributions of this paper are:

\begin{itemize}
    \item We formalize the \emph{weight reassignment problem} for systems with a static set of servers and fault threshold and prove it cannot be solved in asynchronous failure-prone systems.
    \item We introduce a restricted version of the problem called the \emph{pairwise weight reassignment}, in which voting power is transferred between pairs of servers.
    Although similar to asset transfer, we show that this variant cannot be implemented in asynchronous failure-prone systems.
    \item We further restrict the problem to a \emph{restricted pairwise weight reassignment} variant, which can be implemented in asynchronous failure-prone systems, and use it to build a dynamic-weighed atomic storage.
    \item We discuss the relationship between the \emph{pairwise weight reassignment} with the asset transfer problem and compare our dynamic-weighted atomic storage with reconfigurable atomic storage.
\end{itemize}

\vspace{0.5em}    
\noindent\textbf{Organization of the paper}.
Section~\ref{sec:prelim} presents our system model and preliminary definitions.
In Section~\ref{sec:abstraction}, we introduce the weight reassignment problem.
The impossibility of implementing the weight reassignment problem in asynchronous failure-prone systems is presented in Section~\ref{sec:impossibility}.
Section~\ref{sec:weak} presents the restricted versions of the weight reassignment problem: the pairwise weight reassignment and the restricted pairwise weight reassignment.
We also show that pairwise weight reassignment cannot be implemented in asynchronous failure-prone systems.
Section~\ref{sec:implem} implements the restricted pairwise weight reassignment, while in Section~\ref{ch:epoch-less-wr}, we outline a dynamic-weighted atomic storage using this implementation.
Sections~\ref{sec:discussion} and~\ref{sec:conclusion} present related work and concludes the paper, respectively.

\section{Preliminaries}\label{sec:prelim}
\vspace{0.5em}
\noindent\textbf{System model}. 
We consider an asynchronous message-passing system composed of two non-overlapping sets of processes -- a finite set of $n$ servers $\mathcal{S}$ and an infinite set of clients $\Pi$.
Every client or server knows the set of servers.
At most $f$ servers can crash, while any number of clients may crash.
A process is called \textit{correct} if it is not crashed.
Each pair of processes is connected by a reliable communication link.
Processes are sequential, i.e., a process never invokes a new operation before obtaining a response from a previous one.
In the definitions and proofs\gs{would it be necessary to refer to the appendix here?}, we make the standard assumption of the existence of a global clock not accessible to the processes.

\vspace{0.5em}
\noindent\textbf{Weighted majority quorums}. 
Our work relies on weighted majority quorums~\cite{weightedVoting,geoReplicatedSMR,howToAssignWeights,dynVotingAlgMaintainingConsistency}, so we present their definition and one of their properties that plays an essential role throughout this paper.

\begin{definition}[Weighted Majority Quorum System]
    The weighted majority quorum system ({\small WMQS}) refers to a set of quorums where each quorum is composed of a set of servers whose total weight is greater than half of the total weight of all servers.
\end{definition}

Since some minority of servers might have the majority of weights, proportionally smaller quorums can be constituted in {\small WMQS} in contrast to {\small MQS}, yielding\gs{I'd replace ``leading to'' (commonly related to a bad result) by ``yielding''.} to performance improvements.
To guarantee the availability of a distributed system based on {\small WMQS}, a relationship between the servers' weights and $f$ must be satisfied; otherwise, more than half of the voting power can be assigned to $f$ or fewer servers~\cite{geoReplicatedSMR}. 
The following property determines such a relationship.

\begin{property}[Availability of {\small WMQS}]\label{pro:availability-wmqs}
    A {\small WMQS} is available if the sum of the $f$ greatest weights is less than half of the total weight of all servers.
\end{property}

\vspace{0.5em}
\noindent\textbf{Consensus}. 
This paper shows the impossibility of solving two problems related to weight reassignment in asynchronous failure-prone systems.
Our method of showing these results involves reducing consensus to each of these problems (comprehensive explanation of these reductions detailed in Sections~\ref{sec:impossibility} and \ref{sec:weak}). 
In consensus, each correct process proposes a value $v$ through the invocation of $\mathtt{propose}(v)$, and the ultimate goal of processes is to decide upon a single value from the proposed values. 
The operation must return the decided value, and any algorithm that solves consensus must satisfy the following properties (e.g.,~\cite{fischer1983consensus}):
\begin{itemize}    
    \item Agreement. All correct processes must decide the same value.
    \item Validity. If all correct processes propose the same value $v$, they must decide $v$.
    \item Termination. All correct processes must eventually decide.
\end{itemize}


\section{Weight Reassignment Problem}\label{sec:abstraction}
The \emph{weight reassignment problem} aims to capture the safety and liveness properties that must be satisfied as servers' weights change over time.
To formalize this problem, we first need to define the $\mathit{change}$ data structure, which contains the essential information related to the outcome of a weight reassignment operation.

\vspace{0.5em}
\noindent\textbf{Change.} 
Each process $p_i$ (server or client) has a local counter denoted by $\mathit{lc}_{i}$.
We define $\mathit{change}$ as $\{ \mathcal{S} \cup \Pi \} \times \mathbb{N} \times \mathcal{S} \times \mathbb{R}$, where the quadruple $\langle p_i, \mathit{lc}_{i}, s, \Delta \rangle$ indicates that the weight of server $s$ is changed by $\Delta$ as an outcome of a reassignment request made by process $p_i$ with local counter $\mathit{lc}_{i}$.
For any change $\langle *, *, s, \Delta \rangle$, by convention, ``the weight of the change'' refers to $\Delta$, and we say that ``the change is created for $s$''.

\vspace{0.5em}
\noindent\textbf{The problem.}
We introduce a weight reassignment problem with the following operations: 

\begin{itemize}
    \item $\mathtt{reassign}(s, \Delta)$, where $s$ is a server, and $\Delta$ is a real number different from zero, and
    \item $\mathtt{read\_changes}(s)$, where $s$ is a server.
\end{itemize}

Each process can invoke $\mathtt{reassign}(s, \Delta)$ to request changing the weight of server $s$ by $\Delta$.
Without loss of generality, we assume that \emph{only servers} invoke $\mathtt{reassign}$.
When an invocation of $\mathtt{reassign}$ is \textit{completed} (see definition below), a change $c$ corresponding to the invocation's outcome is created, and a message $\langle \mathtt{Complete}, c \rangle$ is returned to the server that invoked the operation.
Any process can invoke $\mathtt{read\_changes}$ to learn about the set of changes created for a server, by which the weight of the server can be calculated. 
Each process must increment its local counter after each invocation of the $\mathtt{reassign}$ operation.

\begin{definition}[\textit{Completed} $\mathtt{reassign}$]\label{def:best-value}
    Assume that a server $s_i$ invokes $\mathtt{reassign}(s_j, \Delta)$ when its local counter is $\mathit{lc}_i$.
    We say that the invocation is \textit{completed} if there is a time after which the response of every invocation $\mathtt{read\_changes}(s_j)$ contains a change $\langle s_i, \mathit{lc}_i, s_j, * \rangle$.
\end{definition}

We define $\mathcal{C}_{s,t}$ as the set containing every change $c$ created for server $s$ such that the $\mathtt{reassign}$ operation led to the creation of $c$ is completed at time $t$.
It is straightforward to show that $\mathcal{C}_{s,t} \subseteq \mathcal{C}_{s,t'}$ for any server $s$ and $t \leq t'$, and we say that $\mathcal{C}_{s,t'}$ is \emph{more up-to-date} than $\mathcal{C}_{s,t}$ if $\mathcal{C}_{s,t} \subset \mathcal{C}_{s,t'}$.
Further, for each server $s$, we assume there is a change defining the initial weight of $s$.
Specifically, given $w$ as the initial weight of $s$, we assume that $\mathtt{reassign}(s,w)$ is completed at time $t=0$.
We denote the weight of a server $s$ at any time $t$ by $\mathtt{W}_{s,t}$, where $\mathtt{W}_{s,t} \triangleq \sum_{ \langle *, *, s, \Delta\rangle \in \mathcal{C}_{s,t} } \Delta$.
We also denote the weight of a set of servers $A \subseteq \mathcal{S}$ by $\mathtt{W}_{A,t}$, where $\mathtt{W}_{A,t} \triangleq \sum_{s \in A} \mathtt{W}_{s,t}$.
With these definitions, we are ready to define the weight reassignment problem.

\begin{definition}[Weight Reassignment Problem]\label{def:wra}
    Any algorithm that solves the weight reassignment problem must satisfy the following properties:
    \begin{itemize}
        \item Integrity. $\forall t \geq 0$, $\forall  F \subset \mathcal{S}$ such that $|F| = f$, $\mathtt{W}_{F,t} < \frac{\mathtt{W}_{\mathcal{S},t}}{2}$.
    
        \item Validity-I. When the $\mathtt{reassign}(s, \Delta)$ operation is completed, a change $\langle *,*,s,\Delta \rangle$ is created if Integrity is not violated; otherwise, a change $\langle *,*,s,0 \rangle$ is created.
            
        \item Validity-II. If $\mathtt{read\_changes}(s)$ is invoked at time $t$, a set containing $\mathcal{C}_{s,t}$ is returned as the response.    

        \item Liveness. If a correct server $s$ invokes $\mathtt{reassign}$ (resp. a correct process $p$ invokes $\mathtt{read\_changes}$), the invocation will eventually be completed, and $s$ (resp. $p$) will receive a message $\langle \mathtt{Complete},* \rangle$ (resp. a set of changes).
    \end{itemize}
\end{definition}

It is straightforward to see why the Liveness property is a part of the problem's definition.
In the following, we discuss why the other properties are required.
\begin{itemize}
    \item Integrity is a consequence of Property~\ref{pro:availability-wmqs}, which determines the relationship between the servers' weights and $f$, guaranteeing the system's availability over time.

    \item The second property states that a change must be created as the outcome of each $\mathtt{reassign}$ invocation.
    It also determines how the change must be created.
    
    Notice that Integrity might be violated if each invocation of $\mathtt{reassign}(s, \Delta)$ is completed by creating a change $\langle *,*,s, \Delta \rangle$ (see example below). 
    Hence, to avoid the violation of Integrity, an invocation $\mathtt{reassign}(s, \Delta)$ might be aborted, i.e., a change $\langle *,*,s,0 \rangle$ is created as the invocation's outcome.
    
    \item The third property determines what responses to a $\mathtt{read\_changes}$ invocation are valid.
    Given any process that invokes $\mathtt{read\_changes}(s)$ at any time $t \geq 0$, it is clear that a valid response must be as up-to-date as $\mathcal{C}_{s,t}$.
    On the other hand, due to asynchrony, it is impossible to guarantee that exactly $\mathcal{C}_{s,t}$ is returned as the response.  
    Consequently, a valid response is one that contains $\mathcal{C}_{s,t}$.
\end{itemize}

\begin{exmp}\label{ex:1}
Let $\mathcal{S} = \{ s_1, s_2, s_3, s_4 \}$, $\Pi = \{ c_1, c_2 \}$, and $f = 1$.
The following sets contain the initial weights: 
    $\mathcal{C}_{s_1,0} = \{ \langle s_1, 1, s_1, 1 \rangle \}, \  
    \mathcal{C}_{s_2,0} = \{ \langle s_2, 1, s_2, 1 \rangle \}, \ 
    \mathcal{C}_{s_3,0} = \{ \langle s_3, 1, s_3, 1 \rangle \}, \text{ and } 
    \mathcal{C}_{s_4,0} = \{ \langle s_4, 1, s_4, 1 \rangle \}$.
Assume that server $s_1$ invokes $\mathtt{reassign}( s_1, 1.5 )$, which is completed at time $t_1$.
Accordingly, a change $\langle s_1, 2, s_1, 1.5 \rangle$ is created as the outcome of the invocation.
It is worth mentioning that this invocation cannot be completed by creating a change with zero weight, as Validity-I enforces the invocation to create a change with non-zero weight when there is no Integrity violation.

Client $c_1$ invokes $\mathtt{read\_changes}(s_1)$ after $t_1$ and receives a set of changes $\mathcal{C}$ at time $t_2$, where $\mathcal{C} = \mathcal{C}_{s_1, 0} \cup \{ \langle s_1, 2, s_1, 1.5 \rangle \}$.
Note that Validity-II is violated if $c_1$ receives $\mathcal{C}_{s_1, 0}$ (or any other set than $\mathcal{C}$).
Client $c_1$ can calculate the weight of $s_1$ using $\mathcal{C}$: the weight of $s_1$ equals $2.5$.
Server $s_3$ invokes $\mathtt{reassign}( s_2, -0.5 )$ after $t_2$, which is completed at time $t_3$.
Notice that creating a change $\langle s_3, 2, s_2, -0.5 \rangle$ violates Integrity, so a change $\langle s_3, 2, s_2, 0 \rangle$ is created.        
If client $c_2$ invokes $\mathtt{read\_changes}(s_2)$ after $t_3$, it will receive $\mathcal{C}' = \mathcal{C}_{s_2,0} \cup \{ \langle s_3, 2, s_2, 0 \rangle \}$.
It is important to note that servers are not allowed to invoke $\mathtt{reassign}( *, 0 )$ because the second parameter of $\mathtt{reassign}$ must be non-zero.
\end{exmp}

\section{Impossibility Result}\label{sec:impossibility}
We now show the weight reassignment problem introduced in the previous section\gs{instead of previous section, why not refer to the section (through a ref command) explicitly?} cannot be implemented in asynchronous failure-prone systems.
We start by presenting an insight into such an impossibility result.

Consider a system in which all correct servers invoke the $\mathtt{reassign}$ operation concurrently such that only one of the invocations can be completed by creating a change with non-zero weight.
That is, creating two or more changes, each with non-zero weight, violates Integrity, meaning that it can make $f$ servers have more than half of the total voting power in the system.
Assume that invocations $\mathtt{reassign}(s_1,\Delta_1), \dots, \mathtt{reassign}(s_n,\Delta_n)$ create such a situation.
One can take the following steps to solve consensus among servers:
\begin{enumerate}
    \item each correct server $s_i$ writes its proposal $v_i$ to a single-writer multi-reader ({\small SWMR}) register $R[i]$ and invokes $\mathtt{reassign}(s_i,\Delta_i)$, where $1\leq i\leq n$, and
    \item if a change with non-zero weight is created for $s_j$, the decided value is the one stored in $R[j]$.
\end{enumerate}
Since the weight of only one of the created changes is non-zero, servers can decide the same value.
Consequently, consensus can be reduced to the weight reassignment problem, which means that the weight reassignment problem cannot be implemented in asynchronous failure-prone systems~\cite{wait-free}.

Based on this insight\gs{would this be an ``intuition'' (intuitions and science are not best friends) or an ``insight''?}, we design an algorithm presented in Algorithm~\ref{alg:cn:c-wr}, by which servers solve consensus using the weight reassignment problem, i.e., it reduces consensus to the weight reassignment problem.
The algorithm is executed by each correct server $s_i$ and provides a function -- $\mathtt{propose}(v_i)$ -- by which $s_i$ proposes a value $v_i$. 
We divide the servers into two disjoint sets, $F$ and $\mathcal{S} \setminus F$, such that $F=\{s_1,s_2,\dots,s_f\}$, and we assume that the initial weight of every server $s \in F$ (resp. $s \in \mathcal{S} \setminus F$) equals $\frac{n-1}{2f}$ (resp. $\frac{n+1}{2(n-f)}$).
Notice that Integrity is satisfied with these initial weights. 
Further, there is a shared array of {\small SWMR} registers $R$ of size $n$ to store servers' proposals.

Each server $s_i$ executes the $\mathtt{propose}$ function.
After storing its proposal in $R[i]$, $s_i$ invokes $\mathtt{reassign}( s_i, 0.5 )$ (resp. $\mathtt{reassign}( s_i, -0.5 )$) if $s_i \in F$ (resp. $s_i \in \mathcal{S} \setminus F$).
It is straightforward to see that two or more invocations of $\mathtt{reassign}$ cannot be completed by creating changes with non-zero weights.
For instance, if $\mathtt{reassign}( s_1, 0.5 )$\gs{invocations of which call (issued) by $s_1$?} and $\mathtt{reassign}( s_{f+1}, -0.5 )$ are completed by creating changes $\langle s_1,2,s_1,0.5 \rangle$ and $\langle s_{f+1},2,s_{f+1},-0.5 \rangle$ at time $t>0$, then we have:
\begin{align*}
    & \mathtt{W}_{F,t} = f \times \frac{n-1}{2f}+0.5 = \frac{n}{2} \\
    & \frac{\mathtt{W}_{\mathcal{S},t}}{2} = \frac{\mathtt{W}_{F,t}+\mathtt{W}_{\mathcal{S}\setminus F,t}}{2} \\
    & \qquad = \frac{f \times \frac{n-1}{2f}+0.5 + (n-f)\times \frac{n+1}{2(n-f)}-0.5}{2} = \frac{n}{2},
\end{align*}
which means that Integrity is violated.

In a loop, for each server $s_j \in \mathcal{S}$, $s_i$ repeatedly invokes $\mathtt{read\_changes}(s_j)$ to see the invocation of which server\gs{similar as here before, I think that we normally use invocations of algorithms, call, procedures issues by a server/process.} is completed by creating a change with non-zero weight.
Because of Liveness, the loop will eventually terminate.
Assume that the invocation of server $s_j$ is completed by creating a change $\langle s_j, 2, s_j, \Delta\rangle$, where $\Delta\neq 0$.
Consequently, $s_i$ returns $R[j]$ as the decided value, and consensus among servers will be solved.

\begin{algorithm}[hbt!]
\caption{Reducing consensus to the weight reassignment problem
-- server $s_i$.}
\label{alg:cn:c-wr}

\begin{algorithmic}[1]
\STATEx{\hspace{-1.6em}$\triangleright$ $R$ is a shared array of {\small SWMR} registers with size $n$}
\STATEx{\hspace{-1.6em}$\triangleright$ if $i \in \{ 1,2,\dots f \}$, $\mathtt{W}_{s_i,0} = \frac{n-1}{2f}$; otherwise, $\mathtt{W}_{s_i,0} = \frac{n+1}{2(n-f)}$}
\STATEx{\hspace{-1.6em}$\triangleright$ $s_i$ executes the $\mathtt{propose}$ function}

\vspace{0.3em}
\STATEx{\hspace{-1.6em}\textbf{function} $\mathtt{propose}(v_i)$}
    \STATE{$R[i] \leftarrow v_i$}

    \NoThen
    \IF{$i \in \{ 1,2, \dots, f \}$}                                                         
        \STATE{$\mathtt{reassign}( s_{i}, 0.5 )$}
    \ELSE
        \STATE{$\mathtt{reassign}( s_{i}, -0.5 )$}
    \ENDIF       

    \STATE{$\mathit{decided\_value} \leftarrow \perp$}

    \REPEAT 
        \NoDo
        \FOR{$j \in \{ 1,2,\dots n \}$}
            \NoThen
            \STATE{$\mathcal{C} \leftarrow \mathtt{read\_changes}(s_j)$}
            \IF{$\langle s_j,2,s_j, \Delta \rangle \in \mathcal{C}$ such that $\Delta \neq 0$}
                \STATE{$\mathit{decided\_value} \leftarrow R[j]$}
            \ENDIF
        \ENDFOR
    \UNTIL{$\mathit{decided\_value} \neq \perp$}
    
    \STATE{\textbf{return} $\mathit{decided\_value}$}                                      
\end{algorithmic}
\end{algorithm}

\begin{thm}\label{thm:cn:autonomous}
    Consensus can be reduced to the weight reassignment problem.
\end{thm}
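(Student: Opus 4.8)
The plan is to prove Theorem~\ref{thm:cn:autonomous} by exhibiting a reduction: assuming a black-box implementation of the weight reassignment problem (plus \textsc{SWMR} registers, which are implementable wait-free in asynchronous message-passing systems), I would show that Algorithm~\ref{alg:cn:c-wr} solves consensus among the $n$ servers. Concretely, I must verify the three consensus properties -- Termination, Agreement, and Validity -- for the algorithm, relying only on the four properties (Integrity, Validity-I, Validity-II, Liveness) guaranteed by Definition~\ref{def:wra}.

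First I would establish the central combinatorial fact that underpins everything: with the chosen initial weights ($\frac{n-1}{2f}$ for each $s\in F$ and $\frac{n+1}{2(n-f)}$ for each $s\in\mathcal{S}\setminus F$), \emph{at most one} of the $n$ concurrent \texttt{reassign} invocations can complete with a non-zero change. The excerpt already computes that if one server in $F$ gains $0.5$ and one in $\mathcal{S}\setminus F$ loses $0.5$ simultaneously, then $\mathtt{W}_{F,t}=\frac{n}{2}=\frac{\mathtt{W}_{\mathcal{S},t}}{2}$, violating Integrity. I would argue this more generally: any two successful non-zero changes push the weight of some $f$-subset to at least half the total, so by Integrity at most one invocation can succeed; by Validity-I the remaining invocations must complete by creating a zero-weight change. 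This ``exactly one winner'' guarantee is the crux of the reduction.

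Next I would discharge the three consensus properties in turn. For \textbf{Termination}: each correct server writes $R[i]$, invokes \texttt{reassign}, and by Liveness its invocation eventually completes; moreover, by Liveness and Validity-II the repeated \texttt{read\_changes} calls eventually return a set witnessing the unique non-zero change, so the loop's guard $\mathit{decided\_value}\neq\perp$ is eventually satisfied and every correct server returns. For \textbf{Agreement}: since exactly one server $s_j$ produces a non-zero change, and by Validity-II every \texttt{read\_changes}$(s_j)$ response is as up-to-date as $\mathcal{C}_{s_j,t}$ and hence contains that unique change once it is created, all correct servers detect the \emph{same} index $j$ and decide $R[j]$. Because $R[j]$ is written once (servers are sequential) before the \texttt{reassign} invocation, all readers read the identical proposal, giving the same decision. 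For \textbf{Validity}: the decided value is $R[j]$ for some server $s_j$ that invoked \texttt{propose}, so it is a proposed value (and if all propose the same $v$, that is what is decided).

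The main obstacle I anticipate is rigorously justifying that \emph{at least one} invocation must succeed with a non-zero change, not merely that at most one does -- otherwise the loop might never find a witness and Termination would fail. This requires arguing that completing \emph{all} $n$ invocations with zero-weight changes would contradict Validity-I, since Validity-I forces a non-zero change whenever Integrity is not thereby violated, and having no server change its weight trivially preserves Integrity, so the ``first'' completing invocation (in the global-clock order) cannot legitimately abort. I would need to handle the interleaving carefully: show that there is a well-defined earliest completing \texttt{reassign}, that at the moment it completes no other non-zero change has yet been created, and therefore Integrity would not be violated by granting it, forcing Validity-I to produce a non-zero change for it. Formalizing this ``someone must win'' step against adversarial asynchronous scheduling is the delicate part; the remaining bookkeeping for Agreement and Validity is routine given the exactly-one-winner lemma.
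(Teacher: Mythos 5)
Your proposal is correct and follows essentially the same route as the paper's proof: establish that Integrity forces at most one \texttt{reassign} invocation to complete with a non-zero change (the paper does this by letting $A$ be the set of winners, computing $\mathtt{W}_{F,t}$ and $\mathtt{W}_{\mathcal{S}\setminus F,t}$, and deriving $|A|<2$), use Validity-I to rule out the all-zero outcome (the ``at least one winner'' step you correctly flag as the delicate point), and then read off Agreement, Validity, and Termination from the exactly-one-winner fact together with Liveness and Validity-II. The paper handles the ``someone must win'' step slightly more bluntly than your earliest-completion argument -- it simply observes that if every completed invocation created a zero-weight change, then any single correct server's invocation could have been granted its non-zero change without violating Integrity, contradicting Validity-I -- but this is the same idea.
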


\begin{cor}\label{cor:impossibility}
    The weight reassignment problem cannot be implemented in asynchronous failure-prone systems. 
\end{cor}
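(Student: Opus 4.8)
The plan is to show that Algorithm~\ref{alg:cn:c-wr} is a correct consensus protocol built only from SWMR registers (which are wait-free implementable in this model~\cite{wait-free}) together with the weight reassignment object; verifying Agreement, Termination, and Validity then establishes the reduction. The whole argument pivots on a single lemma: over the entire execution, \emph{exactly one} of the $\mathtt{reassign}$ invocations issued inside $\mathtt{propose}$ is ever completed by creating a change with nonzero weight. I would prove this lemma in two halves and then derive the three properties from it.

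For the \emph{at most one} half, I would apply Integrity to the fixed reference set $F=\{s_1,\dots,s_f\}$ of the construction. Let $A\subseteq F$ be the $F$-servers whose gain-$\tfrac{1}{2}$ invocation has completed with a nonzero change by time $t$, and let $B\subseteq\mathcal{S}\setminus F$ be the non-$F$ servers whose loss-$\tfrac{1}{2}$ invocation has so completed. Substituting the initial weights $\frac{n-1}{2f}$ on $F$ and $\frac{n+1}{2(n-f)}$ elsewhere gives $\mathtt{W}_{F,t}=\frac{n-1}{2}+\frac{1}{2}|A|$ and $\mathtt{W}_{\mathcal{S},t}=n+\frac{1}{2}|A|-\frac{1}{2}|B|$, so the Integrity requirement $\mathtt{W}_{F,t}<\frac{1}{2}\mathtt{W}_{\mathcal{S},t}$ simplifies to $|A|+|B|<2$. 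Since any correct weight reassignment solution maintains Integrity at all times, at most one nonzero change can ever exist; the displayed computation preceding the statement is exactly the forbidden case $|A|=|B|=1$.

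For the \emph{at least one} half, I would examine the first $\mathtt{reassign}$ invocation (over all servers) to complete. One exists because at least $n-f\geq 1$ correct servers invoke $\mathtt{reassign}$ and Liveness forces their completion. At that instant the global weight assignment is the initial one altered by a single change, and a short case analysis (a gain inside $F$, or a loss outside $F$) shows that this lone change keeps the sum of the $f$ largest weights strictly below half of the total, i.e.\ Integrity holds for \emph{every} size-$f$ subset. By Validity-I the first completion must therefore produce a nonzero change. Combined with the previous bound, exactly one nonzero change is created, for a unique server $s_j$.

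The three properties now follow. For Agreement, since $s_j$ is the only server ever carrying a nonzero change $\langle s_j,2,s_j,\Delta\rangle$, the conditional inside the loop can fire only for that $j$, so every server that returns sets $\mathit{decided\_value}=R[j]$, and $R[j]$ is fixed because $s_j$ writes it before invoking $\mathtt{reassign}$ and the read goes through the SWMR register $R[j]$. For Termination, $s_j$'s invocation is completed, so by Definition~\ref{def:best-value} there is a time after which every $\mathtt{read\_changes}(s_j)$ returns that change, and each $\mathtt{read\_changes}$ returns by Liveness, hence the repeat-loop eventually exits. For Validity, $\mathit{decided\_value}$ is always some $R[j]$ and hence a proposed value, and whenever every server proposes the same $v$ the decision is $v$ --- the nontriviality that suffices to invoke the FLP-style impossibility in Corollary~\ref{cor:impossibility}. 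I expect the \emph{at least one} half to be the main obstacle: unlike the \emph{at most one} bound, which needs Integrity only on the single set $F$, here one must check that a solitary change preserves Integrity against \emph{all} size-$f$ sets, and this is precisely where the asymmetric initial weights $\frac{n-1}{2f}$ and $\frac{n+1}{2(n-f)}$ are required.
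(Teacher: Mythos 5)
Your proposal is correct and takes essentially the same route as the paper: the paper's proof of Theorem~\ref{thm:cn:autonomous} likewise hinges on showing that exactly one $\mathtt{reassign}$ invocation in Algorithm~\ref{alg:cn:c-wr} completes with a nonzero change, proving ``at most one'' by the same Integrity computation on $F=\{s_1,\dots,s_f\}$ (your $|A|+|B|<2$ is the paper's $|A|<2$) and ``at least one'' via Validity-I, and then deriving Agreement, Validity, and Termination exactly as you do. The only cosmetic difference is that the paper phrases the ``at least one'' half as a contradiction with all invocations completing with zero weight rather than examining the first completion.
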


The proof of Theorem~\ref{thm:cn:autonomous} and other theorems presented in the following sections can be found in the appendix.

\section{Restricting the Weight Reassignment Problem}\label{sec:weak}
The weight reassignment problem presented in Section~\ref{sec:abstraction} cannot be implemented in asynchronous failure-prone systems according to Corollary~\ref{cor:impossibility}.
In this section, we try to restrict that problem in order to find a variant that can be implemented in such a system model.
We begin by keeping the total voting power constant by restricting the reassignments to be done in a pairwise manner.
We call the resulting problem the \textit{pairwise weight reassignment}.
 
\subsection{Pairwise weight reassignment} 
Avoiding the Integrity violation is the main difficulty in solving the weight reassignment problem, so we focus on this property.
Our first idea for restricting the problem is to ensure that the right-hand side of the inequality presented in the Integrity property and, consequently, the total weight of all servers, remains constant over time, i.e., at any time $t > 0$, $\mathtt{W}_{\mathcal{S},t} = \mathtt{W}_{\mathcal{S},0}$.
In this way, identifying Integrity violations might become easier because we just need to compare the total weight of the $f$ servers having the greatest weights with $\frac{\mathtt{W}_{\mathcal{S},0}}{2}$.

To apply this restriction, servers can reassign their weights in a pairwise manner, i.e., a server $s_j$ gains a weight $\Delta$ if and only if another server $s_i$ loses $\Delta$.
In such a situation, we say that $\Delta$ is \textit{transferred} from $s_{i}$ to $s_{j}$.
To represent this way of reassigning weights, we define a new operation as follows:

\begin{itemize}
    \item $\mathtt{transfer}(s_i,s_j,\Delta)$ that can be invoked by any server $s_k$ to transfer $\Delta \neq 0$ from $s_{i}$ to $s_{j}$.
\end{itemize}

Similar to the weight reassignment problem, processes can utilize the $\mathtt{read\_changes}$ operation to learn about changes created for each server.
When $\mathtt{transfer}(s_i,s_j,\Delta)$ invoked by $s_k$ is completed, two changes $c=\langle s_k,\mathit{lc}_{k},s_i,-\Delta' \rangle$ and $c'=\langle s_k,\mathit{lc}_{k},s_j,\Delta' \rangle$ corresponding to the transfer’s outcome are created, where $\Delta'$ is either $\Delta$ or $0$.
Further, a message $\langle \mathtt{Complete}, c \rangle$ is returned to $s_k$ (notice that both changes are created with either non-zero weights or zero weights, so returning $c$ is enough to determine the weight of $c'$).
We say that the transfer is completed if there is a time after which the responses of two invocations $\mathtt{read\_changes}(s_i)$ and $\mathtt{read\_changes}(s_j)$ contain $c$ and $c'$, respectively.
Each server increments its local counter after each $\mathtt{transfer}$ invocation. 
By convention, we say that a $\mathtt{transfer}$ invocation is \emph{effective} (resp. \emph{null}) if the weights of created changes are non-zero (resp. zero).

By considering this restriction, we define a new variant of the weight reassignment problem called the \textit{pairwise weight reassignment} in which $\mathtt{transfer}$ is the only operation to reassign weights.
The definition of the pairwise weight reassignment contains all properties of the weight reassignment problem (Definition~\ref{def:wra}) adapted to use the $\mathtt{transfer}$ operation instead of $\mathtt{reassign}$, as follows.

\begin{definition}[Pairwise Weight Reassignment]\label{def:pwr}
    Any algorithm that solves the pairwise weight reassignment problem must satisfy the following properties\gs{again I'd prefer something as, ``any algorithm solving the pairwise weight reassignment problem must satisfy the following properties: ''}:
    \begin{itemize}
        \item P-Integrity. $\forall t \geq 0$, $\forall  F \subset \mathcal{S}$ such that $|F| = f$, $\mathtt{W}_{F,t} < \frac{\mathtt{W}_{\mathcal{S},t}}{2}$.
        
        \item P-Validity-I. When the $\mathtt{transfer}(s_i, s_j, \Delta)$ operation is completed, two changes $\langle *,*,s_i,-\Delta \rangle$ and $\langle *,*,s_j, \Delta \rangle$ are created if P-Integrity is not violated; otherwise, two changes $\langle *,*,s_i, 0 \rangle$ and $\langle *,*,s_j, 0 \rangle$ are created.

        \item P-Validity-II. If $\mathtt{read\_changes}(s)$ is invoked at time $t$, a set containing $\mathcal{C}_{s,t}$ is returned as the response.
                    
        \item P-Liveness. If any correct server $s$ invokes $\mathtt{transfer}$ (resp. a correct process $p$ invokes $\mathtt{read\_changes}$), the invocation will  eventually be completed, and $s$ (resp. $p$) will receive a message $\langle \mathtt{Complete,*}\rangle$ (resp. a set of changes).
    \end{itemize}
\end{definition}

Now this question arises: \emph{Can the pairwise weight reassignment be implemented in asynchronous failure-prone systems?}
The answer to this question is \emph{no}.
The general idea behind this impossibility is similar to the one presented for the weight reassignment problem (Section~\ref{sec:impossibility}). 
Consider a set of servers $F\subset \mathcal{S}$ with size $f$, and assume that all correct servers invoke $\mathtt{transfer}$ concurrently such that only one of the transfers executed by members of $\mathcal{S} \setminus F$ can be completed effectively.
P-Integrity is indeed violated if two or more transfers executed by members of $\mathcal{S} \setminus F$ are completed effectively.
In such a situation, all correct servers can decide on the value proposed by a server $s \in \mathcal{S} \setminus F$ whose transfer is completed effectively (the decided value is selected from the values proposed by members of $\mathcal{S} \setminus F$.)
As a result, servers can solve consensus using pairwise weight reassignment, which means that consensus can be reduced to the pairwise weight reassignment problem.
Hence, pairwise weight reassignment cannot be implemented in asynchronous failure-prone systems.   

Based on this insight, we design an algorithm, presented in Algorithm~\ref{alg:pwr-impossibility}, to solve consensus using pairwise weight reassignment.
The algorithm is executed by each correct server $s_i$ and provides a function $\mathtt{propose}(v_i)$. 
We assume that the initial weight of each server $s\in F$ (resp. $s \in \mathcal{S}\setminus F$) is $\frac{n-1}{2f}$ (resp. $\frac{n+1}{2(n-f)}$), where $F = \{s_1,s_2,\dots,s_f\}$.
Further, there is a shared array of {\small SWMR} registers $R$ with size $n$ to store servers' proposals.

Each server $s_i \in \mathcal{S}$ executes the $\mathtt{propose}$ function.
After storing its proposal in $R[i]$, each server $s_i$ invokes $\mathtt{transfer}$.
The transfers executed by servers must be in such a way that only one of the transfers executed by members of $\mathcal{S} \setminus F$ can be completed effectively.
To this end, each server $s_i \in F$ (resp. $s_i \in \mathcal{S}\setminus F$) invokes $\mathtt{transfer}( s_{i}, s_j, 0.1 )$ (resp. $\mathtt{transfer}( s_{i}, s_1, 0.4 )$), where $j = (i+1) \ \mathit{mod} \ f$.

In the loop, for each server $s_j \in \mathcal{S}\setminus F$, $s_i$ repeatedly invokes $\mathtt{read\_changes}(s_j)$ to see the transfer of which server is completed effectively.
Because of P-Liveness, the loop will eventually terminate.
Assume that the transfer of server $s_j$ is completed effectively by creating two changes $\langle s_j, 2, s_j, -0.4 \rangle$ and $\langle s_j, 2, s_1, 0.4 \rangle$.
Consequently, $s_i$ returns $R[j]$ as the decided value.

It is straightforward to see that the transfer of each correct server $s \in F$ completes effectively without changing the total weight of servers in $F$.
On the other hand, only one transfer executed by members of $\mathcal{S}\setminus F$ can be completed effectively; otherwise, P-Integrity is violated.
For instance, if transfers of $s_{f+1}$ and $s_{f+2}$ are completed effectively by creating changes $\langle s_{f+1},2,s_{f+1},-0.4 \rangle$, $\langle s_{f+1},2,s_1,0.4 \rangle$, $\langle s_{f+2},2,s_{f+2},-0.4 \rangle$, and $\langle s_{f+2},2,s_{1},0.4 \rangle$ at time $t>0$, then we have:
\begin{align*}
    & \mathtt{W}_{F,t} = f \times \frac{n-1}{2f}+0.4+0.4 = \frac{n}{2}+0.3 \\
    & \frac{\mathtt{W}_{\mathcal{S},t}}{2} = \frac{\mathtt{W}_{\mathcal{S},0}}{2} = \frac{n}{2},
\end{align*}
which means that P-Integrity is violated.

\begin{algorithm}[t!]
\caption{Reducing consensus to the pairwise weight reassignment problem
-- server $s_i$.}
\label{alg:pwr-impossibility}

\begin{algorithmic}[1]
\STATEx{\hspace{-1.6em}$\triangleright$ $R$ is a shared array of {\small SWMR} registers with size $n$}
\STATEx{\hspace{-1.6em}$\triangleright$ if $i \in \{ 1,2,\dots f \}$, $\mathtt{W}_{s_i,0} = \frac{n-1}{2f}$; otherwise, $\mathtt{W}_{s_i,0} = \frac{n+1}{2(n-f)}$}
\STATEx{\hspace{-1.6em}$\triangleright$ $s_i$ executes the $\mathtt{propose}$ function}

\vspace{0.3em}
\STATEx{\hspace{-1.6em}\textbf{function} $\mathtt{propose}(v_i)$}
    \STATE{$R[i] \leftarrow v_i$}

    \NoThen
    \IF{$i \in \{ 1,2, \dots, f \}$}
        \STATE{$j \leftarrow (i+1) \ \mathit{mod} \ f$}
        \STATE{$\mathtt{transfer}( s_i,s_j, 0.1 )$}
    \ELSE
        \STATE{$\mathtt{transfer}( s_i,s_1, 0.4 )$}
    \ENDIF       

    \STATE{$\mathit{decided\_value} \leftarrow \perp$}

    \REPEAT 
        \NoDo
        \FOR{$j \in \{ f+1,f+2,\dots n \}$}
            \NoThen
            \IF{$\langle s_j,2,s_1, 0.4 \rangle \in \mathtt{read\_changes}(s_j)$}
                \STATE{$\mathit{decided\_value} \leftarrow R[j]$}
            \ENDIF
        \ENDFOR
    \UNTIL{$\mathit{decided\_value} \neq \perp$}
    
    \STATE{\textbf{return} $\mathit{decided\_value}$}  
\end{algorithmic}
\end{algorithm}

\begin{thm}\label{thm:cn-pwr-n-f}
    Consensus can be reduced to the weight reassignment problem.
\end{thm}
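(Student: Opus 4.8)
The plan is to prove that Algorithm~\ref{alg:pwr-impossibility} solves consensus given any solution to the pairwise weight reassignment problem, and then to check the three consensus properties. The first step is to fix the invariant that $\mathtt{transfer}$ never changes the total weight, so $\mathtt{W}_{\mathcal{S},t}=\mathtt{W}_{\mathcal{S},0}=f\cdot\frac{n-1}{2f}+(n-f)\cdot\frac{n+1}{2(n-f)}=n$ and $\frac{\mathtt{W}_{\mathcal{S},t}}{2}=\frac{n}{2}$ for all $t$. I would also record that $\mathtt{W}_{F,0}=\frac{n-1}{2}$ and that the transfers issued by members of $F$ form an internal cycle $s_i\to s_{(i+1)\bmod f}$ of weight $0.1$, so they leave $\mathtt{W}_{F,t}$ unchanged; the only way to increase $\mathtt{W}_{F,t}$ is through an effective transfer issued by a member of $\mathcal{S}\setminus F$, each of which moves $0.4$ into $s_1\in F$.

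Next I would establish the key claim that at most one transfer issued by a member of $\mathcal{S}\setminus F$ is ever effective. Indeed, if two were effective then $\mathtt{W}_{F,t}=\frac{n-1}{2}+0.8=\frac{n}{2}+0.3>\frac{n}{2}$, and taking the size-$f$ set $F$ itself contradicts P-Integrity (this is exactly the computation displayed before the theorem). Uniqueness pins down a single index $j^*\in\{f+1,\dots,n\}$, when one exists, whose transfer is effective.

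For Termination I would show that such a $j^*$ does exist. Because $n>2f$, the set $\mathcal{S}\setminus F$ contains at least one correct server, whose transfer completes by P-Liveness. By P-Validity-I a transfer is turned null only when making it effective would violate P-Integrity; but a single effective $\mathcal{S}\setminus F$ transfer gives $\mathtt{W}_{F,t}=\frac{n}{2}-0.1$ and, after verifying the heaviest size-$f$ set, keeps every size-$f$ set below $\frac{n}{2}$, so the first $\mathcal{S}\setminus F$ transfer to complete cannot be null. Hence at least one, and therefore exactly one, such transfer is effective. Since by P-Validity-II and P-Liveness the $\mathtt{read\_changes}$ response eventually contains the effective change $\langle s_{j^*},2,s_1,0.4\rangle$, every correct server's loop eventually sets $\mathit{decided\_value}\neq\perp$ and returns.

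Agreement and Validity then follow from the uniqueness of $j^*$: every server leaves the loop by detecting this one effective transfer, hence all assign $\mathit{decided\_value}\leftarrow R[j^*]$; as $R[j^*]$ is an SWMR register written once by $s_{j^*}$, all correct servers decide the same value. For Validity, $R[j^*]$ is the proposal of $s_{j^*}$ and hence a proposed value, equal to $v$ whenever all correct servers propose $v$ (and the effective transfer is issued by a correct server). Putting the three properties together shows Algorithm~\ref{alg:pwr-impossibility} solves consensus using pairwise weight reassignment, which is the desired reduction. I expect the main obstacle to be the liveness half of the argument: proving that P-Validity-I is forced to make at least one $\mathcal{S}\setminus F$ transfer effective, since this requires checking that one such transfer preserves P-Integrity for every size-$f$ set -- not only for $F$ -- after the internal $F$-cycle has redistributed weight.
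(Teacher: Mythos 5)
Your proposal follows essentially the same route as the paper's proof: the same decomposition into Agreement, Validity, and Termination, the same observation that the $F$-internal cycle leaves $\mathtt{W}_{F,t}$ unchanged, the same computation showing that two effective transfers from $\mathcal{S}\setminus F$ push $\mathtt{W}_{F,t}$ to $\frac{n}{2}+0.3$ and violate P-Integrity, and the same appeal to P-Validity-I to rule out all such transfers being null. The step you flag as the main obstacle---verifying that a single effective transfer into $s_1$ preserves P-Integrity for \emph{every} size-$f$ set after the $F$-cycle has (possibly only partially) redistributed weight---is indeed asserted rather than checked in your argument, but the paper's own proof asserts exactly the same step without verification, so your write-up matches the published argument in both structure and level of rigor.
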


As the \gs{minor suggestion: you could eventually highlight quite briefly here why a pairwise approach (including $\mathtt{transfer}$) seemed to be a promising idea (that is worth considering.), even if it insufficient.}restriction presented above is insufficient to restrict the weight reassignment problem in a way that can be implemented in asynchronous failure-prone systems, we define another problem in the following.

\subsection{Restricted pairwise weight reassignment}
In pairwise weight reassignment, after invoking $\mathtt{transfer}$, servers must use consensus or similar primitives to create changes in order to preserve P-Integrity.
The objective of using consensus for each $\mathtt{transfer}$ invocation is to \emph{decide} whether the invocation is effective or not, i.e., which changes must be created: two changes with non-zero weights or with zero weights.
One possible approach to create changes without consensus is eliminating such a globally taken decision, i.e., given a server $s_i$ that wants to invoke $\mathtt{transfer}(*,*,\Delta)$, $s_i$ is allowed to execute the operation if its invocation does not violate (P-)Integrity.

We now present two conditions that, if they are satisfied, ensure an effective transfer\gs{consider rephrasing it; e.g., \dots two conditions that, if they are satisfied, can ensure(or simply ``ensure'') an effective transfer:}: 

\begin{itemize}
    \item (C1) only $s_i$ can invoke $\mathtt{transfer}(s_i,*,\Delta)$, i.e., other servers cannot transfer some of $s_i$'s weight, and
    \item (C2) the weight of $s_i$ must always be greater than $\frac{\mathtt{W}_{\mathcal{S},0}}{2(n-f)}$.
\end{itemize}

Note that if C1 holds, C2 is a locally verifiable condition.

\begin{thm}\label{thm:ensure-transfer}
    Provided that a server $s_i$ wants to invoke $\mathtt{transfer}$, we can ensure that the transfer is effective if C1 and C2 are met. 
\end{thm}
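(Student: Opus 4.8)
The plan is to show that C1 and C2, maintained by every server, turn P-Integrity into a purely \emph{local} invariant, so that an effective transfer never needs a global decision to be safe. The crux is that $\mathtt{transfer}$ leaves the total weight unchanged, hence $\mathtt{W}_{\mathcal{S},t} = \mathtt{W}_{\mathcal{S},0}$ for all $t$, and P-Integrity becomes equivalent to the dual statement that \emph{every} set of $n-f$ servers holds strictly more than half the total. First I would establish this dual form: fixing any $F \subset \mathcal{S}$ with $|F| = f$ and passing to its complement $\mathcal{S}\setminus F$ of size $n-f$, the bound $\mathtt{W}_{F,t} < \frac{\mathtt{W}_{\mathcal{S},t}}{2}$ is equivalent to $\mathtt{W}_{\mathcal{S}\setminus F,t} > \frac{\mathtt{W}_{\mathcal{S},0}}{2}$. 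If each of the $n-f$ servers in the complement has weight exceeding the C2 threshold $\frac{\mathtt{W}_{\mathcal{S},0}}{2(n-f)}$, summing yields $\mathtt{W}_{\mathcal{S}\setminus F,t} > (n-f)\cdot\frac{\mathtt{W}_{\mathcal{S},0}}{2(n-f)} = \frac{\mathtt{W}_{\mathcal{S},0}}{2}$. Thus the single per-server lower bound of C2, if respected by all servers, forces P-Integrity for every choice of $F$.

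Next I would show that an effective transfer preserves this per-server invariant, so that it cannot create a P-Integrity violation. The receiver $s_j$ only \emph{gains} weight, so its bound holds trivially; the only server whose weight drops is the sender $s_i$, and this is exactly what C2 protects. It remains to argue that $s_i$ can certify C2 after the transfer using only local information. Here C1 is essential: since no process other than $s_i$ may create a negative change for $s_i$, and since processes are sequential, $s_i$ has observed every weight-\emph{decreasing} change ever applied to itself. Consequently, any change for $s_i$ that $s_i$ has not yet seen is necessarily an incoming (positive) transfer, so the actual weight $\mathtt{W}_{s_i,t}$ is \emph{at least} the value $s_i$ computes from its own view. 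If $s_i$ checks locally that this conservative estimate minus $\Delta$ still exceeds the threshold, then the true post-transfer weight does too, and C2 survives.

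Combining the two parts, whenever $s_i$'s local check succeeds it may safely create the non-zero changes $\langle *,*,s_i,-\Delta\rangle$ and $\langle *,*,s_j,\Delta\rangle$: every server remains above the threshold, so P-Integrity continues to hold, and the transfer is completed effectively without invoking consensus. I expect the main obstacle to lie in the asynchronous concurrency of the second paragraph — one must preclude an unseen, concurrent change dragging $s_i$ below the threshold. The argument closes this gap not through any timing assumption but through the one-sidedness guaranteed by C1: all changes $s_i$ has not observed are strictly weight-increasing, so a locally verified bound is automatically a valid lower bound on $s_i$'s real weight.
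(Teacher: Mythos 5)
Your proposal is correct, and its first half coincides exactly with the paper's Lemma~\ref{lem:wmin-p-integrity}: you pass to the complement $\mathcal{S}\setminus F$, use the constancy of $\mathtt{W}_{\mathcal{S},t}$, and sum the per-server bound $\frac{\mathtt{W}_{\mathcal{S},0}}{2(n-f)}$ over $n-f$ servers to recover Inequality~\ref{eq:f-s}, hence P-Integrity. Where you diverge is in the role assigned to C1. The paper's second ingredient, Lemma~\ref{lem:who-is-allowed}, is a \emph{necessity} argument: it shows by a consensus-reduction contradiction that if two servers could both invoke $\mathtt{transfer}(s_j,*,*)$, one would have to be aborted in a way that solves consensus, so C1 cannot be dropped; the theorem is then declared ``immediate'' from the two lemmas. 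You instead give a direct \emph{sufficiency} argument for why C1 makes the per-server invariant preservable: only $s_i$ creates negative changes for itself, $s_i$ is sequential and hence has seen all of them, so every unseen change is weight-increasing and $s_i$'s locally computed weight is a sound lower bound on $\mathtt{W}_{s_i,t}$; a successful local check therefore guarantees C2 after the transfer, and the receiver's weight only grows. This is precisely the content the paper compresses into the remark ``if C1 holds, C2 is a locally verifiable condition'' and into the check at line~\ref{alg:dr:10} of Algorithm~\ref{alg:transfer}, but you actually prove it. Your route buys a self-contained, implementation-facing proof of the stated implication; the paper's route buys the additional information that C1 is not merely convenient but unavoidable in an asynchronous system. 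Both are valid; yours is, if anything, the more faithful proof of the theorem as literally stated.
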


The proof of Theorem~\ref{thm:ensure-transfer} can be found in the appendix.
Here we present an insight into these conditions.
It is clear that:
\begin{equation}\label{eq:s-sf-f-1}
    \mathtt{W}_{\mathcal{S},t} = \mathtt{W}_{\mathcal{S} \setminus F,t} + \mathtt{W}_{F,t} \hfill (\forall F \subset \mathcal{S}, \forall t \geq 0)
\end{equation}
It is straightforward to obtain the following inequality using (P-)Integrity and Equation~\ref{eq:s-sf-f-1} when $|F| = f$:
\begin{equation}\label{eq:f-s}
    \mathtt{W}_{\mathcal{S} \setminus F,t} > \frac{\mathtt{W}_{\mathcal{S},t}}{2} \hfill (\forall F \subset \mathcal{S} \text{ such that } |F|=f, \forall t \geq 0)
\end{equation}
Inequality~\ref{eq:f-s}, which is equivalent to (P-)Integrity, states that the total weight of any $n-f$ servers must be greater than half of the total weight of all servers.
Notice that if the weight of each server is greater than $\frac{\mathtt{W}_{\mathcal{S},0}}{2(n-f)}$ at any time $t$, the total weight of servers in set $\mathcal{S} \setminus F$ is greater than $|\mathcal{S} \setminus F|\times \frac{\mathtt{W}_{\mathcal{S},0}}{2(n-f)} = \frac{\mathtt{W}_{\mathcal{S},0}}{2}$, i.e., $\mathtt{W}_{\mathcal{S} \setminus F,t} > \frac{\mathtt{W}_{\mathcal{S},0}}{2}$.
Hence, if C2 holds for each transfer, (P-)Integrity is always preserved.
To see why C1 is required, assume that at least two servers $s_i,s_k\neq s_j$ invoke $\mathtt{transfer}(s_j, *, \Delta_1)$ and $\mathtt{transfer}(s_j, *, \Delta_2)$ at time $t$ such that $\mathtt{W}_{s_j,t}-\Delta_1 -\Delta_2 \leq \frac{\mathtt{W}_{\mathcal{S},0}}{2(n-f)}$ but $\mathtt{W}_{s_j,t}-\Delta_1 > \frac{\mathtt{W}_{\mathcal{S},0}}{2(n-f)}$ and $\mathtt{W}_{s_j,t}-\Delta_2 > \frac{\mathtt{W}_{\mathcal{S},0}}{2(n-f)}$.
In fact, if both transfers are completed effectively, then C2 is violated; however, one of the transfers can be completed effectively without violating C2.
In such a situation, $s_i$ and $s_k$ can solve consensus (like the impossibility results presented in Sections~\ref{sec:impossibility} and \ref{sec:weak}.)
Consequently, in order to satisfy C2 in asynchronous failure-prone systems, we must assume that for each server $s_j$, there is at most one server that is allowed to invoke $\mathtt{transfer}(s_j, *, *)$.
Without loss of generality, we assume that only $s_j$ can invoke $\mathtt{transfer}(s_j, *, *)$.

These conditions indeed can restrict pairwise weight reassignment.
We define a new version of pairwise weight reassignment called \textit{restricted pairwise weight reassignment} to consider these conditions.
Specifically, it contains all properties of the pairwise weight reassignment (Definition~\ref{def:pwr}) except for two changes: P-Integrity is replaced by RP-Integrity to consider C2, and P-Validity-I is adapted so that only server $s$ can invoke $\mathtt{transfer}(s,*, *)$ due to C1.
In the next section\gs{minor comment: I'd prefer referring to the next section with the command ref}, we elaborate on how servers can use these conditions to transfer weights in asynchronous failure-prone systems.

\begin{definition}[Restricted Pairwise Weight Reassignment]\label{def:rpwr}
    Any algorithm that solves the restricted pairwise weight reassignment problem must satisfy the following properties:
    \begin{itemize}
        \item RP-Integrity. $\forall t \geq 0$, $\forall s \in \mathcal{S}$, $\mathtt{W}_{s,t} > \frac{\mathtt{W}_{\mathcal{S},0}}{2(n-f)}$. 
        
        \item RP-Validity-I. When the $\mathtt{transfer}(s_i, s_j, \Delta)$ operation is completed, two changes $\langle s_i,*,s_i,- \Delta \rangle$ and $\langle s_i,*,s_j, \Delta \rangle$ are created if RP-Integrity is not violated; otherwise, two changes $\langle s_i,*,s_i, 0 \rangle$ and $\langle s_i,*,s_j, 0 \rangle$ are created.

        \item RP-Validity-II. If $\mathtt{read\_changes}(s)$ is invoked at time $t$, a set containing $\mathcal{C}_{s,t}$ is returned as the response.
                    
        \item RP-Liveness. If any correct server $s$ invokes $\mathtt{transfer}$ (resp. a correct process $p$ invokes $\mathtt{read\_changes}$), the invocation will  eventually be completed, and $s$ (resp. $p$) will receive a message $\langle \mathtt{Complete,*}\rangle$ (resp. a set of changes).
    \end{itemize}
\end{definition}


\begin{exmp}\label{ex:2}
    Let $\mathcal{S} = \{ s_{1}, s_{2}, s_{3}, s_{4}, s_5, s_6, s_7 \}$ and $f = 2$.
    In this setting, the weight of each server must be greater than $0.7$ at any time $t\geq 0$.
    Notice that the size of each quorum is four at the beginning.
    Assume that $\mathtt{transfer}$ is invoked by $s_4$, $s_5$, and $s_6$ according to Fig.~\ref{fig:example-rpwr}, and the invocations are completed before time $t_{1}$.
    The new weights of servers at time $t_1$ are presented in the figure.
    As a result of these weight reassignments, $\{ s_1,s_2,s_3 \}$ (a minority of servers) constitutes a quorum.

    This figure contains two other invocations made by $s_6$ and $s_7$ after time $t_1$.
    Notice that these invocations cannot be executed in the restricted pairwise weight reassignment due to RP-Integrity violation.
    However, they could be executed in the pairwise weight reassignment, resulting in the weights in the red shaded rectangular area.\gs{this could be rephrased in order to put emphasis on the new property, that prevents executing the invocations in the red shaded rectangular area.}
        
    \begin{figure}[t!]
        \centering
        \includegraphics[trim=0cm 4cm 0cm 4cm, clip, scale=.31]{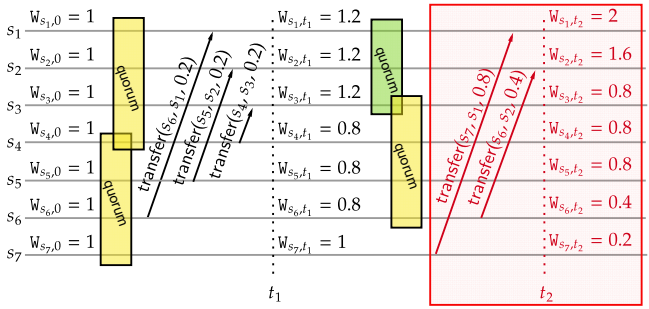}
        \caption{An example showing how the restricted pairwise weight reassignment works.
        The part surrounded by a red box cannot be executed in the restricted pairwise weight reassignment.}
        \label{fig:example-rpwr}
    \end{figure}
\end{exmp}   

\subsection{Discussion}
The restrictions imposed by the pairwise weight reassignment and restricted pairwise weight reassignment problems can lead to practical limitations in dealing with failed or slow servers in asynchronous systems. 
Here, we discuss these limitations in further detail.

In the weight reassignment problem, in the case of having a failed/slow server, there are two possible approaches to mitigate the impact of such a server: (I) decreasing the weight of the failed/slow server by other servers or (II) increasing the weights of other servers. 
This flexibility allows other servers to form quorums by a minority of servers during execution. 

However, in pairwise weight reassignment, the second approach cannot be employed in the case of having a failed/slow server, as the total weight of servers cannot change. 
The situation is even worse for the restricted pairwise weight reassignment, as servers cannot use both approaches when having a failed/slow server.
For instance, consider the same system as presented in Example \ref{ex:2}. 
Assume that the initial weights of servers $s_{1}, s_{2}, s_{3}, s_{4}, s_5, s_6,$ and $s_7$ are $1.6, 1.4, 0.8,0.8,0.8,0.8,$ and $0.8$, respectively. 
Assume that servers $s_1$ and $s_2$ are failed or slow. 
Then, the size of the smallest quorum is five, and servers cannot form smaller quorums by reassigning weights.
        
\section{Implementing the Restricted Pairwise Weight Reassignment}\label{sec:implem}
This section presents a protocol that implements the restricted pairwise weight reassignment in asynchronous failure-prone systems. 
The protocol is composed of two algorithms.
The first algorithm, Algorithm~\ref{alg:read-changes}, implements the $\mathtt{read\_changes}$ operation and contains two parts: the first part can be executed by any process (lines~\ref{line:function-read-changes}-\ref{line:return-end-changes}) and the second part must be executed by each correct server (lines~\ref{line:receipt-read-changes}-\ref{line:read-changes-ack}).  
To read the changes created for a server $s$, a process $p_i$ invokes $\mathtt{read\_changes}$.
Then, $p_i$ broadcasts a message $ \langle \mathtt{RC}, s, \mathit{lc}_i \rangle $ to all servers.
Upon receiving the message, each server responds by sending a set of changes it has stored for server $s$.
When $p_i$ receives more than $f$ responses, it takes the union of the received set of changes.
Let $\mathcal{C}$ be the resulting set.
Then, $p_i$ broadcasts $\mathcal{C}$ to all servers.
After receiving $\mathcal{C}$, each server stores $\mathcal{C}$ and responds by sending an acknowledgment.
As soon as receiving $n-f$ acknowledgments, $p$ can ensure that $\mathcal{C}$ is stored by at least $n-f$ servers and completes the invocation by returning $\mathcal{C}$.

\begin{algorithm}[ht!]
\caption{The implementation of $\mathtt{read\_changes}$.}
\label{alg:read-changes}
\begin{algorithmic}[1]
\STATEx{\hspace{-1.65em}$\triangleright$ This part can be executed by all process.}

\STATEx{\hspace{-1.6em}\textbf{operation} $\mathtt{read\_changes}(s)$}
    \STATE{$\mathcal{C} \leftarrow \emptyset $} \label{line:function-read-changes} 
    \STATE{\textbf{broadcast} $ \langle \mathtt{RC}, s \rangle $}
    \REPEAT\label{line:rc4}
        \STATE{\textbf{wait until} $\langle \mathtt{RC\_Ack}, \mathcal{C}_{s} \rangle$ is received from a server}  
        \STATE{$\mathcal{C} \leftarrow \mathcal{C} \cup \mathcal{C}_{s}$}
    \UNTIL{$|\mathcal{C}| > f $}\label{line:rc7}
    \STATE{\textbf{broadcast} $ \langle \mathtt{WC}, \mathcal{C} \rangle $}
    \STATE{\textbf{wait until} $\langle \mathtt{WC\_Ack} \rangle$ is received from $n-f$ servers}\label{line:rc9}
    \STATE{\textbf{return} $\mathcal{C}$}\label{line:return-end-changes}
\end{algorithmic}
\vspace{-0.7em}
\hrulefill
\vspace{-0.3em}
\begin{algorithmic}[1]    
\setcounter{ALG@line}{11}
    \STATEx{\hspace{-1.65em}$\triangleright$ This part is executed by every correct server $s_i$.}

    \STATEx{\hspace{-1.65em}\textbf{upon receipt of} $\langle \mathtt{RC}, s \rangle$ from $p$} 
        \STATE{$\mathcal{C} \leftarrow \mathtt{get\_changes}(s)$ \hfill $\triangleright$ see Algorithm~\ref{alg:transfer} }\label{line:receipt-read-changes}
        \STATE{\textbf{send} $\langle \mathtt{RC\_Ack}, \mathcal{C}\rangle $ to $p$}

    \vspace{0.5em}
    \STATEx{\hspace{-1.65em}\textbf{upon receipt of} $\langle \mathtt{WC}, \mathcal{C} \rangle$ from $p$} 
        \STATE{$\mathtt{write\_changes}(\mathcal{C})$ \hfill $\triangleright$ see Algorithm~\ref{alg:transfer} }
        \STATE{\textbf{send} $\langle \mathtt{WC\_Ack}\rangle $ to $p$}\label{line:read-changes-ack}
\end{algorithmic}
\end{algorithm}

The second algorithm, Algorithm~\ref{alg:transfer}, contains the pseudo-code of the $\mathtt{transfer}$ operation and must be executed by each correct server.
It is worth mentioning that a server can invoke $\mathtt{transfer}$ only if its last invocation of $\mathtt{transfer}$ is complete.
We assume that servers invoke $\mathtt{transfer}$ based on the information provided by a monitoring system (to see how it can be implemented or its provided information can be used, refer to~\cite{aware, hheydari}).

The main idea behind Algorithm~\ref{alg:transfer} is as follows.
Each server $s$ can compute its weight using the changes stored in a local variable.
If its weight is greater than $\Delta + \frac{\mathtt{W}_{\mathcal{S},0}}{2(n-f)}$, it can transfer $\Delta$ to another server.

In further detail, each server has a local counter, denoted by $\mathit{lc}$, initialized with $1$, and used to distinguish the transferred weights.   
Also, each server has a variable denoted by $\mathit{register}$ to store the tag and the value of its local register (we elaborate on this variable in the next section.\gs{what about a ref to the section?}) 
If a server $s_{i}$ wants to transfer weight $\Delta$ to another server $s_{j}$, it first examines whether its weight remains greater than $\frac{\mathtt{W}_{\mathcal{S},0}}{2(n-f)}$ by transferring $\Delta$ (line~\ref{alg:dr:10})\gs{minor comment: perhaps ``line'' in lowercase?}.
If this is the case, then $s_{i}$ broadcasts a message $\langle \mathtt{T} , \langle s_{i}, \mathit{lc}_i, s_{i}, -\Delta \rangle, \langle s_{i}, \mathit{lc}_i, s_{j}, \Delta \rangle \rangle$ using a reliable broadcast primitive~\cite{Had94} (line~\ref{line:rb-transfer}).
Each server has a set denoted by $\mathcal{C}$ to store every received change, initialized with a set containing the initial weights of all servers.
By receiving a message $\langle \mathtt{T} , c,c' \rangle$ broadcast by $s_{i}$ (line~\ref{line:17}), every server adds $c$ and $c'$ to its set $\mathcal{C}$ and sends a response to $s_{i}$ (lines~\ref{line:10}-\ref{line:11}). 
If $s_{i}$ receives at least $n-f$ responses, then the invocation completes (lines~\ref{alg:dr:25}-\ref{alg:dr:26}).

\begin{algorithm}[hbt!]
\caption{The implementation of the $\mathtt{transfer}$ operation -- server $s_{i}$.}
\label{alg:transfer}
\begin{algorithmic}[1]
\STATEx{\hspace{-1.65em}\textbf{variables}}
    \STATE{$\mathit{lc}_i \leftarrow 1$}  
    \STATE{$\mathcal{C} \leftarrow \{ \langle s, 1, s, 1 \rangle \ | \ \forall s \in \mathcal{S} \}$}
    
        \STATE{$\mathit{register}[\mathit{tag}[\mathit{ts}, \mathit{pid}], \mathit{val}] \leftarrow \langle \langle 0, \perp \rangle,\perp \rangle$}
\vspace{0.3em}
\STATEx{\hspace{-1.65em}\textbf{function} $\mathtt{weight()}$}
    \STATE{$T \leftarrow \mathtt{get\_changes}(s_i)$}  
    \STATE{\textbf{return} $\sum_{\langle *,*,*,\Delta \rangle \in T} \Delta$}

\vspace{0.3em}
\STATEx{\hspace{-1.65em}\textbf{function} $\mathtt{get\_changes}(s)$}
    \STATE{\textbf{return} $\{ \langle *,*,s',* \rangle \ | \ \forall \ \langle *,*,s',* \rangle \in \mathcal{C}: s'=s \}$}

\vspace{0.3em}
\STATEx{\hspace{-1.65em}\textbf{function} $\mathtt{write\_changes}(\mathcal{C}')$}
    \NoThen
    \NoDo
    \FORALL{$\langle s_{j}, c, s_{k}, * \rangle \in \mathcal{C}' \setminus \mathcal{C}$}
        \IF{$i = k$}    \label{line:if-read}
            \STATE{$\mathit{register} \leftarrow \mathtt{read}()$} \label{line:read}
        \ENDIF
        \STATE{$\mathcal{C} \leftarrow \mathcal{C} \cup \{ \langle s_{j}, c, s_{k}, * \rangle \} $}\label{line:10} 
        \STATE{\textbf{send} $\langle \mathtt{T\_Ack}, c \rangle $ to $s_{j}$ if not already sent}\label{line:11} 
    \ENDFOR

\vspace{0.3em}
\STATEx{\hspace{-1.6em}\textbf{operation} $\mathtt{transfer}(s_i, s_{j}, \Delta)$}
    \NoThen
    \IF{$\mathtt{weight}() > \Delta + \frac{\mathtt{W}_{\mathcal{S},0}}{2(n-f)}$}    \label{alg:dr:10}
        \STATE{$\mathcal{C} \leftarrow \mathcal{C} \cup \{ \langle s_{i}, \mathit{lc}_i, s_{i}, -\Delta \rangle, \langle s_{i}, \mathit{lc}_i, s_{j}, \Delta \rangle \} $} 
        \STATE{\textbf{\small RB\_broadcast} $ \langle \mathtt{T} , \langle s_{i}, \mathit{lc}_i, s_{i}, -\Delta \rangle, \langle s_{i}, \mathit{lc}_i, s_{j}, \Delta \rangle \rangle $}\label{line:rb-transfer} 
        \STATE{\textbf{wait until} receiving $\langle \mathtt{T\_Ack}, \mathit{lc}_i \rangle $ from $n-f-1$ servers}\label{alg:dr:25}

        \STATE{$\mathit{msg} \leftarrow \langle \mathtt{Complete}, \langle s_{i}, \mathit{lc}_i, s_{i}, -\Delta \rangle \rangle $}
    \ELSE
        \STATE{$\mathit{msg} \leftarrow \langle \mathtt{Complete}, \langle s_{i}, \mathit{lc}_i, s_{i}, 0 \rangle \rangle $}
    \ENDIF
    \STATE{$\mathit{lc}_i \leftarrow \mathit{lc}_i + 1$}
    \STATE{\textbf{return} $\mathit{msg}$}\label{alg:dr:26}

\vspace{0.3em}
\STATEx{\hspace{-1.65em}\textbf{\small upon RB\_deliver} $\langle \mathtt{T} , \langle s_{j}, c, s_{j}, -\Delta \rangle, \langle s_{j}, c, s_{k}, \Delta \rangle \rangle$} 
    \STATE{$\mathtt{write\_changes}\left(\{ \langle s_{j}, c, s_{j}, -\Delta \rangle, \langle s_{j}, c, s_{k}, \Delta \rangle \}\right)$}\label{line:17}

\end{algorithmic} 
\end{algorithm}
    
\begin{thm}\label{thm:rpwr:satisfy-properties}
    The implementation of restricted pairwise weight reassignment (Algorithms~\ref{alg:read-changes} and \ref{alg:transfer}) satisfies the properties of the problem (Definition~\ref{def:rpwr}).
\end{thm}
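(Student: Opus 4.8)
The plan is to verify the four properties of Definition~\ref{def:rpwr} one at a time, grouping them into the three safety properties (RP-Integrity, RP-Validity-I, RP-Validity-II) and the single liveness property (RP-Liveness). The backbone of the safety argument is Theorem~\ref{thm:ensure-transfer}: if I can show that the implementation enforces conditions C1 and C2, then every transfer that passes the guard is effective and preserves the integrity invariant, which handles both RP-Integrity and the \emph{effective} branch of RP-Validity-I. RP-Validity-II then follows from a quorum-intersection argument between the storage of completed changes and the read phase, and RP-Liveness from the reliable-broadcast delivery guarantee together with the existence of at least $n-f$ correct servers.

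For RP-Integrity and RP-Validity-I, I would first observe that C1 holds by construction: the only change with a negative weight created for a server $s_i$ is the one emitted by $s_i$ itself on line~\ref{line:rb-transfer}, since the source field of that change is fixed to $s_i$, so no other server ever decreases $s_i$'s weight. Given C1, the guard on line~\ref{alg:dr:10} tests exactly C2, namely that $s_i$'s weight stays strictly above $\frac{\mathtt{W}_{\mathcal{S},0}}{2(n-f)}$ after subtracting $\Delta$. When the guard holds, Theorem~\ref{thm:ensure-transfer} certifies the transfer is effective and does not violate the invariant, so the two non-zero changes of RP-Validity-I are the ones created and disseminated; when it fails, the operation returns the pair of zero-weight changes, matching the \emph{otherwise} branch. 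Summing over all servers and using that each effective transfer creates the coupled pair $\langle s_i,*,s_i,-\Delta\rangle,\langle s_i,*,s_j,\Delta\rangle$, the total weight is conserved, so Inequality~\ref{eq:f-s} (equivalently RP-Integrity) holds at every time.

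For RP-Validity-II, I would run the standard write/read quorum intersection. A change for $s$ becomes completed only after it is stored at a set of at least $n-f$ servers -- either through the acknowledgements gathered by the transfer that created it (line~\ref{alg:dr:25} together with the sender itself) or through the write-back phase of a previous $\mathtt{read\_changes}$ that terminated on line~\ref{line:rc9}. An invocation of $\mathtt{read\_changes}(s)$ at time $t$ gathers responses from more than $f$ servers (line~\ref{line:rc7}); since any such set and any set of $n-f$ servers satisfy $(f+1)+(n-f)>n$, they intersect, so the union $\mathcal{C}$ returned by the read contains every change of $\mathcal{C}_{s,t}$. The subsequent broadcast of $\mathcal{C}$ and the wait for $n-f$ acknowledgements (line~\ref{line:rc9}) re-establish the $n-f$-storage invariant for the returned set, which both guarantees durability and keeps $\mathcal{C}_{s,t}$ monotone in $t$.

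RP-Liveness is the most mechanical part: the guard on line~\ref{alg:dr:10} always evaluates, so $\mathtt{transfer}$ never blocks before the broadcast; reliable broadcast then delivers $\langle \mathtt{T},\cdot,\cdot\rangle$ to all correct servers, each of which answers with a $\mathtt{T\_Ack}$, and since at least $n-f-1$ of them are correct and distinct from the sender, the wait on line~\ref{alg:dr:25} completes, while the same $n-f$-correct-servers bound discharges the waits in $\mathtt{read\_changes}$. I expect the main obstacle to lie in RP-Integrity rather than in any of these steps: the guard on line~\ref{alg:dr:10} is evaluated against $s_i$'s local, possibly stale copy of $\mathcal{C}$, whereas RP-Integrity constrains the global, asynchronously-growing set $\mathcal{C}_{s,t}$ of completed changes. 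The delicate part is to show that this local test is nonetheless sound -- that the weight $s_i$ reads locally is a safe proxy for $\mathtt{W}_{s_i,t}$. This rests on three facts that must be combined precisely: C1 together with the sequentiality of processes ensures $s_i$'s local view of its own (weight-decreasing) changes is exact; the use of reliable broadcast prevents equivocation, so no two correct servers ever adopt conflicting change sets; and the pairwise coupling of each transfer's two changes conserves total weight, so weight counted by $s_i$ cannot be double-spent. Making this coupling airtight -- ruling out a transient window in which $s_i$'s own decrement has completed while an incoming increment it relied upon has not -- is where I would concentrate the detailed argument.
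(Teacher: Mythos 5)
Your proposal follows essentially the same route as the paper's proof: the same property-by-property decomposition, the guard on line~\ref{alg:dr:10} combined with the fact that only $s_i$ ever creates a negative change for $s_i$ (i.e., C1 and C2) for RP-Integrity and RP-Validity-I, the $(f+1)+(n-f)>n$ quorum-intersection argument for RP-Validity-II, and the count of $n-f$ correct servers for RP-Liveness. The one substantive difference is that you explicitly flag, but do not close, the soundness question of whether the \emph{locally} computed weight on line~\ref{alg:dr:10} is a safe proxy for the globally defined $\mathtt{W}_{s_i,t}$ over \emph{completed} changes --- in particular the transient window in which $s_i$'s own decrement has completed while an incoming increment it counted has not. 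You should be aware that the paper's own proof does not address this point either (it simply asserts that the weight before the transfer exceeds $\Delta + \frac{\mathtt{W}_{\mathcal{S},0}}{2(n-f)}$ and that only $s$ can decrease its weight), so your identification of this as the delicate step is a fair observation rather than a deviation; but as written your argument for RP-Integrity remains a sketch at exactly that step, whereas everything else in your proposal is complete and matches the paper.
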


\begin{thm}\label{thm:rpwr:implementation}
    Restricted pairwise weight reassignment can be implemented in asynchronous failure-prone systems.
\end{thm}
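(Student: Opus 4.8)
The plan is to derive Theorem~\ref{thm:rpwr:implementation} as an almost immediate consequence of Theorem~\ref{thm:rpwr:satisfy-properties} together with the observation that Algorithms~\ref{alg:read-changes} and~\ref{alg:transfer} are expressible entirely within the asynchronous failure-prone model and that none of their blocking steps can stall forever. Theorem~\ref{thm:rpwr:satisfy-properties} already certifies that this pair of algorithms meets RP-Integrity, RP-Validity-I, RP-Validity-II, and RP-Liveness; what remains for an \emph{implementability} claim is to check that the protocol invokes only building blocks that exist in the model and that every correct process eventually returns.

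First I would enumerate the primitives the code uses and match each against the model of Section~\ref{sec:prelim}. The algorithms rely on reliable point-to-point links (assumed), a reliable broadcast primitive (implementable in asynchronous crash-prone message passing, hence its use via \texttt{RB\_broadcast}/\texttt{RB\_deliver} is legitimate and consensus-free, cf.~\cite{Had94}), set union over locally stored changes, and the atomic SWMR \texttt{register} accessed through \texttt{read}/\texttt{write}, which can be realized by an ABD-style quorum construction. This construction is available precisely because the availability assumption on the WMQS (Property~\ref{pro:availability-wmqs}) forces $n > 2f$, so that any two sets of $n-f$ servers intersect in a correct server. Crucially, no step performs agreement: the effectiveness of a \texttt{transfer} is decided by $s_i$ alone via the local test on line~\ref{alg:dr:10}, which is sound by conditions C1 and C2 (Theorem~\ref{thm:ensure-transfer}), rather than by any globally taken decision.

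Next I would argue termination of every blocking wait. The only places a correct process can block are the wait-until statements, each of which awaits at most $n-f$ messages (the acknowledgments on lines~\ref{line:rc9} and~\ref{alg:dr:25}, the latter being $n-f-1$ together with the issuing server itself). Since at most $f$ servers crash, at least $n-f$ are correct and, by link reliability and the server-side handlers, eventually respond; hence these waits complete. For the collection loop of \texttt{read\_changes} (lines~\ref{line:rc4}--\ref{line:rc7}), the terminating condition is reached because $n-f > f$ correct servers reply, so strictly more than $f$ responses arrive. Consequently every invocation by a correct process completes, which is exactly the termination half of RP-Liveness and the content needed for implementability.

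The hard part will be confirming that no hidden synchrony or agreement is smuggled in through the interaction between \texttt{transfer} and \texttt{read\_changes} -- in particular that two concurrent effective transfers can never be driven into a situation requiring them to agree, which is what dooms the unrestricted and pairwise variants (Sections~\ref{sec:impossibility} and~\ref{sec:weak}). Here I would lean on the fact that C1 makes each server the sole author of transfers debiting its own account, while C2 is a purely local predicate over $\mathcal{C}$, so distinct servers' decisions are independent and the reliable broadcast merely disseminates -- never reconciles -- them. With the primitives in place, all waits terminating, and safety and liveness inherited from Theorem~\ref{thm:rpwr:satisfy-properties}, the implementability conclusion follows.
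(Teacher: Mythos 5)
Your proposal is correct and follows essentially the same route as the paper's own proof: invoke Theorem~\ref{thm:rpwr:satisfy-properties} for the four properties, observe that the algorithms use only primitives available in the asynchronous crash-prone model, and argue that every blocking step waits for at most $n-f$ (or, via $n>2f$, more than $f$) responses, which the correct servers eventually supply. Your version is somewhat more detailed than the paper's (notably in making the $n>2f$ consequence of Property~\ref{pro:availability-wmqs} and the consensus-freeness of the local test on line~\ref{alg:dr:10} explicit), but the decomposition and key observations coincide.
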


\section{Dynamic-Weighted Atomic Storage}\label{ch:epoch-less-wr}
This section demonstrates how storage systems based on {\small MQS}  in asynchronous systems can be adapted to leverage the advantages of dynamic {\small WMQS}. 
To do so, we construct a dynamic-weighted atomic storage, where the weights of servers can be reassigned using the restricted pairwise weight reassignment, and its stored value can be accessed by two operations: $\mathtt{read}$ and $\mathtt{write}$.

In a nutshell, there are two main requirements to construct such storage.
First, each process $p$ (client or server) that wants to execute $\mathtt{read}$ or $\mathtt{write}$ protocols needs to store the most up-to-date set of the completed changes $\mathcal{C}$ that it knows. 
All $\mathtt{read}{/}\mathtt{write}$ protocol messages carry $\mathcal{C}$, and $p$ updates it as soon as discovering a more up-to-date set of completed changes. 
The servers reject any operation issued by $p$ containing a set of changes different from $\mathcal{C}$ and respond by sending their current set of completed changes to $p$, which updates its set $\mathcal{C}$. 
By receiving a set of changes that differs from its set of changes, $p$ restarts the executing operation. 
The second requirement is that, before accessing the system, $p$ must know the initial weight of each server. 

Our protocol extends the classical ABD algorithm \cite{abd} for supporting multiple writers and working with the 
 restricted pairwise weight reassignment. 
In the following, we highlight the main aspects of the $\mathtt{read}$ and $\mathtt{write}$ protocols (the complete algorithms can be found in the appendix.\gs{please do not forget to replace Appendix by the reference to the online document with proofs and additional information.}) 
These protocols work in phases. 
Each phase corresponds to accessing a weighted quorum of servers in $\mathcal{C}$. 
The $\mathtt{read}$ protocol works as follows:

\begin{itemize}
    \item \emph{1st Phase:} a reader requests a set of tuples  $\langle \mathit{tag},\mathit{val} \rangle$ ($\mathit{val}$ is the value a server stores, and $\mathit{tag}$ is its associated tag\footnote{A tag $\mathit{tg}$ is a pair holding the timestamp $\mathit{tg}.\mathit{ts}$ and the writer's id $\mathit{tg}.\mathit{pid}$ associated with the value stored by the register. A tag $\mathit{tg}_1$ is less than another $\mathit{tg}_2$ if (I) $\mathit{tg}_1.\mathit{ts} < \mathit{tg}_2.\mathit{ts}$, or (II) $\mathit{tg}_1.\mathit{ts} = \mathit{tg}_2.\mathit{ts}$ and $\mathit{tg}_1.\mathit{pid} < \mathit{tg}_2.\mathit{pid}$.}) from a weighted quorum of servers in the most up-to-date set of completed changes $\mathcal{C}$ and selects
    the one with the highest tag $\langle \mathit{tag}_h,\mathit{val}_h \rangle$;
    \item \emph{2nd Phase:} the reader performs an additional write-back phase in the system and waits for confirmations from a weighted quorum of servers in $\mathcal{C}$ before returning $\mathit{val}_h$.
\end{itemize}
    
The $\mathtt{write}$ protocol works in a similar way:
\begin{itemize}
    \item \emph{1st Phase:} a writer obtains a set of tags from a weighted quorum of servers in $\mathcal{C}$ and chooses the highest, $\mathit{tag}_h$; the tag to be written is defined by incrementing $\mathit{tag}_h.ts$ and assigning $\mathit{tag}_h.\mathit{pid}$ to the writer's id;
    \item \emph{2nd Phase:} the writer sends a tuple  $\langle \mathit{tag},\mathit{val} \rangle$  to the servers of $\mathcal{C}$, writing $\mathit{val}$ with tag $\mathit{tag}$, and waits for confirmations from a weighted quorum.
\end{itemize}

\vspace{0.5em}
\noindent\textbf{Correctness.} 
The following theorem states that the dynamic-weighted atomic storage can be implemented using the $\mathtt{read}{/}\mathtt{write}$ protocols if the weights of servers are reassigned by invoking $\mathtt{transfer}$ (Algorithm~\ref{alg:transfer}). 

\begin{thm}\label{thm:atomic-storage-correctness}
    The storage system implemented using the described $\mathtt{read}{/}\mathtt{write}$ protocols is atomic storage.
\end{thm}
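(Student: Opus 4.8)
The plan is to prove atomicity (linearizability) of the $\mathtt{read}/\mathtt{write}$ object by exhibiting, for every execution, a total order on the completed operations that respects real-time precedence and is consistent with the returned values, exactly as in the standard ABD argument~\cite{abd}. Concretely, I would assign to each operation the tag it reads or writes in its phases and order operations by these tags (with ties among reads broken by real-time order and each read placed immediately after the write whose tag it returns). Proving this is a valid linearization reduces to two ingredients: a \emph{quorum-intersection} guarantee that holds despite weights being reassigned over time, and the usual tag-monotonicity conditions. The bulk of the novelty lies in the first ingredient; the second is inherited from multi-writer ABD once intersection is available.

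First I would establish the quorum-intersection guarantee. Because every $\mathtt{transfer}$ is pairwise --- it creates one change $\langle \cdot,\cdot,s_i,-\Delta\rangle$ and one change $\langle \cdot,\cdot,s_j,\Delta\rangle$ --- the total weight is invariant: $\mathtt{W}_{\mathcal{S},t}=\mathtt{W}_{\mathcal{S},0}$ for all $t$, as guaranteed by the correctness of the implementation (Theorem~\ref{thm:rpwr:satisfy-properties}). Hence any weighted quorum, computed under any completed set of changes, is a set of servers whose total weight exceeds the \emph{fixed} value $\mathtt{W}_{\mathcal{S},0}/2$. I would then argue that two weighted quorums taken under the \emph{same} configuration $\mathcal{C}$ must intersect, since otherwise their disjoint union would carry weight greater than $\mathtt{W}_{\mathcal{S},0}$, a contradiction. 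To lift this to operations that may observe different configurations, I would invoke the protocol mechanism described in Section~\ref{ch:epoch-less-wr}: every $\mathtt{read}/\mathtt{write}$ message carries $\mathcal{C}$, a server rejects any operation whose $\mathcal{C}$ differs from its own and returns its current (more up-to-date) set, and the issuing process adopts it and restarts. Using the monotonicity of completed changes ($\mathcal{C}_{s,t}\subseteq\mathcal{C}_{s,t'}$ for $t\le t'$) together with the fact that a completed change is stored at a quorum before any operation using it can return, I would show that for any operation that returns, its quorum (weighted under its own configuration) necessarily intersects the quorum of any operation that starts after it returns.

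Next I would discharge the tag conditions on top of intersection. The two-phase structure does the work: a reader first collects $\langle\mathit{tag},\mathit{val}\rangle$ pairs from a weighted quorum and selects the highest tag $\mathit{tag}_h$, then writes $\langle\mathit{tag}_h,\mathit{val}_h\rangle$ back to a weighted quorum before returning; a writer first learns the highest tag from a weighted quorum, forms a strictly larger tag by incrementing $\mathit{ts}$ and attaching its own $\mathit{pid}$, then stores it at a weighted quorum. Intersection then yields the standard facts: every read returns a value written by some write (or the initial value); if operation $\mathit{op}_1$ completes before $\mathit{op}_2$ begins then the tag of $\mathit{op}_2$ is at least that of $\mathit{op}_1$, and strictly greater when $\mathit{op}_2$ is a write; and distinct writes obtain distinct tags because the $\mathit{pid}$ component breaks timestamp ties. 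Ordering operations by tag (reads placed just after the write they read, earlier reads of equal tag first by real time) is then total, respects real-time order, and returns the correct value, which is precisely atomicity.

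The hard part will be the dynamic quorum-intersection step rather than the tag bookkeeping. Since weights fluctuate, I cannot rely on a single static quorum system, and I must rule out the scenario in which an operation completes against a quorum computed under a stale configuration while a later operation uses a quorum that, under the updated weights, is disjoint from it. The argument hinges on showing that the configuration-agreement-and-restart mechanism, combined with the invariance $\mathtt{W}_{\mathcal{S},t}=\mathtt{W}_{\mathcal{S},0}$ and RP-Integrity (which via Inequality~\ref{eq:f-s} keeps the weight of any $n-f$ servers above $\mathtt{W}_{\mathcal{S},0}/2$, so an available quorum always exists and operations remain live), forces every pair of quorums witnessed by causally related operations to share a server. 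Making this precise --- in particular identifying the exact configuration against which each completing operation's intersection is charged, and confirming it survives concurrent $\mathtt{transfer}$s --- is where I would spend most of the effort.
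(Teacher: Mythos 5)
There is a genuine gap in the central step of your plan. You propose to prove that ``for any operation that returns, its quorum (weighted under its own configuration) necessarily intersects the quorum of any operation that starts after it returns.'' In this system that claim is simply false, and the paper's own Example~\ref{ex:2} is a counterexample: with $n=7$, $f=2$, after the three transfers complete, $\{s_1,s_2,s_3\}$ constitutes a weighted quorum, yet it is disjoint from $\{s_4,s_5,s_6,s_7\}$, which was a quorum under the initial weights. The configuration-agreement-and-restart mechanism only guarantees that the two phases of a \emph{single} operation are served under one configuration (where the $\mathtt{W}_{\mathcal{S},0}/2$ counting argument of Lemma~\ref{lem:no-change} gives intersection); it does nothing to make quorums of two operations that run under \emph{different} configurations physically intersect. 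So the scenario you correctly identify as the dangerous one --- a write completes at a stale quorum and a later read uses a disjoint new quorum --- cannot be ruled out by intersection, and the effort you plan to spend there would not converge.

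What you are missing is the mechanism the paper actually relies on: an information-flow (state-transfer) argument rather than an intersection argument. In Algorithm~\ref{alg:transfer}, inside $\mathtt{write\_changes}$ (lines~\ref{line:if-read}--\ref{line:read}), a server whose weight is being changed first executes a full two-phase $\mathtt{read}()$ of the register and updates its local $\mathit{register}$ \emph{before} the change takes effect. Since the total weight is constant, any set that becomes a quorum only after a transfer must contain the server that gained weight, and that server has just read (and therefore stores) a value at least as recent as anything returned by a read that completed before the transfer. This is the content of Lemma~\ref{lem:one-change}, which is then iterated over multiple transfers (Lemmas~\ref{lem:several-change} and~\ref{lem:many-change}) to obtain the atomicity condition of Definition~\ref{def:atomic-register}. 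Your tag bookkeeping and the within-configuration counting argument are fine and match the paper, but without the embedded $\mathtt{read}()$-on-weight-gain step your proof of the dynamic case cannot be completed.
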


\section{Related Work}
\label{sec:discussion}

\noindent\textbf{Weight (re)assignment.} 
The notion of \emph{majority} quorum was extended to be a \emph{weighted majority} quorum to improve the performance of replicated systems with diverse servers assigned with different voting power~\cite{weightedVoting}. 
{\small WHEAT} (WeigHt-Enabled Active replicaTion)~\cite{geoReplicatedSMR} shows that additional spare servers and weighted voting allow the system to benefit from diverse quorum sizes, enabling it to make progress by employing proportionally smaller quorums and potentially obtaining significant latency improvements.
In {\small WHEAT}, each assigned weight is either $w_{\mathit{min}}$ or $w_{\mathit{max}}$.
These values are defined in such a way that the safety and liveness properties of quorums are always satisfied.
We also considered a minimum weight for defining the restricted pairwise weight reassignment, just like {\small WHEAT}. 
Still, we consider more general weight schemes since our weights can have any value greater than the defined minimum.

In practice, network characteristics may be subject to run-time variations, and thus the assigned weights may also require to be changed over time. 
Accordingly, the problem of integrating {\small WMQS} with weight reassignment protocols was introduced to allow reassigning weights over time according to the observed performance variations.
This problem was studied for partially synchronous systems in~\cite{berger2023chasing, aware, dynVotingSchemeInDS, dynVotingAlgMaintainingConsistency, geoReplicatedSMR}, where the weight reassignment protocols are based on consensus or similar primitives.
Besides, such a problem was studied for asynchronous systems in~\cite{hheydari}, where the weights can be reassigned in a pairwise manner using an epoch-based protocol.
In the presented protocol, reassignment requests issued during an epoch can only be applied at the end of the epoch. 
Notice that the duration of epochs impacts the performance of the protocol, and\gs{it seems that there is no contrast between the previous sentence and this one. So that, you could introduce this sentence differently.} determining the optimal duration for epochs is challenging.
That study inspires our restricted pairwise weight reassignment; however, our implementation is epochless.
Moreover, in that study, the total weight of servers might become less than $\mathtt{W}_{\mathcal{S},0}$, leading to the loss of voting power as the system progresses.

\vspace{0.5em}
\noindent\textbf{Relationship with the asset transfer problem.} 
In the asset transfer problem, there are some accounts, each holding some assets owned by $k \geq 1$ servers. 
Some assets of any account can be transferred to another account if the source's balance does not become negative.
It was proved by Guerraoui et al.~\cite{cnCrypto}\gs{``It was proved by [author's name plus citation here]''} that if there is an account with $k > 1$ owners, the consensus number of the problem is $k$, i.e., the problem cannot be implemented in asynchronous failure-prone systems.
The insight into such an impossibility is as follows.
Consider an account with $k>1$ owner, and assume that all owners concurrently want to transfer some assets from the account to another account(s) such that the balance of the account will become negative by executing all transfers.
In such a situation, some transfers must be aborted to keep the account's balance non-negative, which requires consensus.

Reassigning weights in a pairwise manner is inspired by the asset transfer problem (consider weights equivalents to assets.)
These problems are similar in reassigning/transferring weights/assets, and both cannot be implemented in asynchronous failure-prone systems.
However, there is a significant difference between them: there is no condition related to the distribution of assets in the asset transfer problem, but the total of the $f$ greatest weights should be less than half of the total weights in the pairwise weight reassignment to satisfy P-Integrity.

To solve the asset transfer problem in asynchronous failure-prone systems, a restricted version of the problem called the $1$-asset transfer problem is presented in which each account is owned by exactly one owner.
It was proved that such a restricted problem could be implemented in asynchronous failure-prone systems~\cite{cnCrypto}.
In the restricted pairwise weight reassignment, the assumption that transferring a weight $\Delta$ from a server $s$ to another server can be made only by $s$ is inspired by the $1$-asset transfer problem.


\vspace{0.5em}
\noindent\textbf{Relationship with asynchronous reconfiguration.} 
Reconfigurable atomic storage \cite{dynAtomicStorageWithoutCons,effModConsensus-freeFST, jehl2017case, smartMerge, spiegelman2017dynamic} implements atomic registers~\cite{interprocessComI} in systems with the possibility of changing the set of servers over time, i.e., servers can join and leave the system during an execution.
The reconfigurable atomic storage is similar to the dynamic-weighted atomic storage in which quorum formations might change over time, i.e., a subset of servers that form a quorum during a time interval might not form a quorum after that interval.
Based on such a similarity, one might say that the techniques used to solve the reconfigurable atomic storage, e.g., generalized lattice agreement~\cite{faleiro2012generalized}, can be employed to solve the dynamic-weighted atomic storage; however, this is not the case because the system's availability\gs{system's availability?} is defined differently in these problems.

In dynamic-weighted atomic storage, the system remains available as long as the number of failures does not exceed the fault threshold; however, in reconfigurable atomic storage, the system remains available as long as any pending configuration has a majority of servers that did not crash and were not proposed for removal.    
In other words, the fault threshold is static and independent of the reassignment requests in the dynamic-weighted atomic storage; however, the fault threshold is dynamic and determined based on the pending join and leave requests in reconfigurable atomic storage (see Definition 1 in~\cite{dynAtomicStorageWithoutCons}).
      
    
\section{Conclusion}
\label{sec:conclusion}

This paper studies the problem of integrating weighted majority quorums with weight reassignment protocols for any asynchronous system with a static set of servers and static fault threshold while guaranteeing availability.
We showed that such a problem could not be solved in asynchronous failure-prone distributed systems.
Then, we presented a restricted version of the problem called \textit{pairwise weight reassignment}, in which weights can only be reassigned pairwisely.
We showed that pairwise weight reassignment could not be implemented in asynchronous failure-prone systems.
We also discussed the relation between the pairwise weight reassignment and the asset transfer problem.
We presented a restricted version of the pairwise weight reassignment called \textit{restricted pairwise weight reassignment} that can be implemented in asynchronous failure-prone systems.
As a case study, we presented a dynamic-weighted atomic storage based on the implementation of the restricted pairwise weight reassignment.

\section*{Acknowledgments}
We thank the ICDCS'23 anonymous reviewers for their constructive comments to improve the paper.
This work was supported by the Ministry of Higher Education and Research of France, FCT through the ThreatAdapt project (FCT-FNR/0002/2018) and the LASIGE Research Unit (UIDB/00408/2020 and UIDP/00408/2020), and by the European Commission through the VEDLIoT project (H2020 957197).

\bibliography{ref.bib}

\appendix
\noindent\textbf{Proof of Theorem~\ref{thm:cn:autonomous}}.
We show that servers can solve consensus using Algorithm~\ref{alg:cn:c-wr}, i.e., three properties of consensus -- Agreement, Validity, and Termination -- can be satisfied.
\begin{itemize}[leftmargin=1.67em]
    \item (Agreement) 
        Recall that each server $s$ executes the $\mathtt{propose}$ function in Algorithm~\ref{alg:cn:c-wr}.
        Then, $s$ invokes $\mathtt{reassign}$ according to its identifier.
        We first show that only one of the $\mathtt{reassign}$ invocations can be completed by creating a change with non-zero weight.
        For the sake of contradiction, assume that there is a set $A \subseteq \mathcal{S}$ such that  $|A| \geq 2$ and the invocation of each server $s \in A$ completes at time $t>0$ by creating a change with non-zero weight.
        Let $F = \{s_1, \dots, s_f \}$. 
        Assume that $k$ members of $A$ are in $F$, i.e., $|A \cap F| = k$.        
        We have:
        
        
        \begin{align}\label{eq:them:1:2}
               & \mathtt{W}_{F,t} =  \underbrace{f \times \frac{n-1}{2f}}_{\mathtt{W}_{F,0}}  +  k \times 0.5 
            \\ & \mathtt{W}_{\mathcal{S}\setminus F,t} = \underbrace{(n-f) \times \frac{n+1}{2(n-f)}}_{\mathtt{W}_{\mathcal{S}\setminus F,0}} - (|A|-k) \times 0.5 \nonumber
        \end{align}

        It is straightforward to obtain the following inequality using Integrity and Inequality~\ref{eq:f-s}:
        \begin{equation}\label{eq:them:1:12345}
            \mathtt{W}_{F,t'} < \mathtt{W}_{\mathcal{S}\setminus F,t'} \hfill (\forall F \subset \mathcal{S} \text{ such that } |F|=f, \forall t' \geq 0)
        \end{equation}
        
        From Equations~\ref{eq:them:1:2} and Inequality~\ref{eq:them:1:12345}, we have:
        \begin{align*}
               & \mathtt{W}_{F,t'} <  \mathtt{W}_{\mathcal{S}\setminus F,t'} 
            \\ & \quad \Rightarrow \frac{n-1}{2}  +  k \times 0.5 < \frac{n+1}{2} - (|A|-k) \times 0.5 
            \\ & \quad \Rightarrow |A| < 2, 
        \end{align*}
        which is a contradiction since we assumed that $|A|\geq 2$.

        Next, we show that all invocations cannot be completed by creating changes with zero weights.
        For contradiction, assume all invocations are completed by creating changes with zero weights.
        There are two possibilities for a correct server $s_i$: $s_i\in F$ or $s_i \in \mathcal{S}\setminus F$.
        If $s_i\in F$, the invocation $\mathtt{reassign}(s_i,0.5)$ could be completed by creating a change with weight $0.5$, as Integrity is still preserved.
        Likewise, if $s_i\in \mathcal{S}\setminus F$, the invocation $\mathtt{reassign}(s_i,-0.5)$ could be completed by creating a change with weight $-0.5$.
        Since the invocation is completed by creating a change with zero weight, Validity-I is violated.

        Consequently, only one of the $\mathtt{reassign}$ invocations can be completed by creating a change with non-zero weight.
        Since the decided value corresponds to the invocation completed by creating a change with non-zero weight, the Agreement property is satisfied.
    \item (Validity)
        We say that a server $s$ is \emph{correct} if its $\mathtt{reassign}$ invocation is completed in Algorithm~\ref{alg:cn:c-wr}.
        We must show that if all correct servers propose the same value $v$, they must decide $v$.
        Note that the decided value in Algorithm~\ref{alg:cn:c-wr} is among the values proposed by servers whose $\mathtt{reassign}$ invocations are completed.
        In other words, the decided value is among the values proposed by the correct servers.
        Therefore, the Validity property is satisfied.
    \item (Termination)
        Notice that every invocation of the $\mathtt{reassign}$ operation will eventually terminate due to the Liveness property of the weight reassignment problem.
        Hence, the $\mathtt{reassign}$ invocation completed by creating a change with non-zero weight will eventually terminate.
        Also, the $\mathtt{read\_changes}$ invocations that enable servers to learn which $\mathtt{reassign}$ invocation is completed by creating a change with non-zero weight will eventually terminate.
        Consequently, each correct server can decide eventually.
\end{itemize}

\vspace{0.5em}
\noindent\textbf{Proof of Theorem~\ref{thm:cn-pwr-n-f}}. 
Like Theorem~\ref{thm:cn:autonomous}, we need to show that three properties of consensus -- Agreement, Validity, and Termination -- can be satisfied using Algorithm~\ref{alg:pwr-impossibility}.
Let $F = \{s_1, \dots, s_f \}$. 
\begin{itemize}
    \item (Agreement) 
        Recall that each server $s_i \in \mathcal{S}$ executes the $\mathtt{propose}$ function in Algorithm~\ref{alg:pwr-impossibility}.
        Then, $s_i$ invokes $\mathtt{transfer}$.
        Note that each server $s \in F$ transfers some of its weight to another member of $F$.
        Consequently, the total weight of members of $F$ does not change by completing transfers executed by members of $F$.
        We now show that only one of the transfers executed by members of $\mathcal{S} \setminus F$ can be completed by creating a change with non-zero weight.
        
        Like the proof of Theorem~\ref{thm:cn:autonomous}, we first show that multiple transfers executed by members of $\mathcal{S} \setminus F$ cannot be completed by creating changes with non-zero weights.
        For the sake of contradiction, assume that there is a set $A \subseteq \mathcal{S} \setminus F$ such that  $|A| \geq 2$ and the transfer of each server $s \in A$ completes at time $t>0$ by creating a change with non-zero weight.
        We know that:
        \begin{align}\label{eq:them:2:2}
               & \mathtt{W}_{F,t} = \underbrace{f \times \frac{n-1}{2f}}_{\mathtt{W}_{F,0}} + |A| \times 0.4
            \\ & \mathtt{W}_{\mathcal{S}\setminus F,t} = \underbrace{(n-f) \times \frac{n+1}{2(n-f)}}_{\mathtt{W}_{\mathcal{S}\setminus F,0} } - |A| \times 0.4 \nonumber
        \end{align}

        From Inequality~\ref{eq:them:1:12345} and Equations~\ref{eq:them:2:2}, we have:
        \begin{align*}
               & \mathtt{W}_{F,t} <  \mathtt{W}_{\mathcal{S}\setminus F,t} 
            \\ & \quad \Rightarrow \frac{n-1}{2}  +  |A| \times 0.4 < \frac{n+1}{2} - |A| \times 0.4 
            \\ & \quad \Rightarrow |A| < \frac{5}{4}, 
        \end{align*}
        which is a contradiction because $|A| \geq 2$ according to our assumption.

        Next, we show that all transfers executed by members of $\mathcal{S} \setminus F$ cannot be completed by creating changes with zero weights.
        For contradiction, assume all such transfers are completed by creating changes with zero weights.
        Consider a correct server $s_i \in \mathcal{S}\setminus F$.
        The invocation $\mathtt{transfer}(s_i,s_1,0.4)$ could be completed by creating two changes with weights $0.4$ and $-0.4$, because by creating such changes, P-Integrity is still preserved.
        Since the invocation is completed by creating changes with zero weights, P-Validity-I is violated.

        Consequently, only one of the transfers executed by members of $\mathcal{S} \setminus F$ can be completed by creating two changes with non-zero weights.
        Since the decided value corresponds to the transfer completed by creating two changes with non-zero weights and executed by a member of $\mathcal{S} \setminus F$, the Agreement property is satisfied.
    \item (Validity)
        This property holds by the same argument presented for the Validity property in the proof of Theorem~\ref{thm:cn:autonomous}.
    \item (Termination)
        Every $\mathtt{read\_changes}$ or $\mathtt{transfer}$ invocation will eventually terminate according to P-Liveness.
        Hence, the transfer executed by a member of $\mathcal{S} \setminus F$ and completed by creating a change with non-zero weight will eventually terminate.
        Besides, the $\mathtt{read\_changes}$ invocations that enable servers to learn the transfer of which member of $\mathcal{S} \setminus F$ is completed by creating a change with non-zero weight will eventually terminate.
        Consequently, each correct server $s$ can decide eventually.
\end{itemize}

\vspace{0.5em}
\noindent\textbf{Proof of Theorem~\ref{thm:ensure-transfer}}.
We present two preliminary lemmas before proving Theorem~\ref{thm:ensure-transfer}.

\begin{lem}\label{lem:wmin-p-integrity}
    If $\mathtt{W}_{s,t} > \frac{\mathtt{W}_{\mathcal{S},0}}{2(n-f)}$ for each server $s$ at any time $t$, then P-Integrity is always met.
\end{lem}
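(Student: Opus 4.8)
The plan is to derive P-Integrity from the per-server lower bound by passing through the equivalent reformulation already recorded in Inequality~\ref{eq:f-s}. The one observation that makes this work is that pairwise reassignment leaves the total weight unchanged, so $\mathtt{W}_{\mathcal{S},t} = \mathtt{W}_{\mathcal{S},0}$ at every time $t$; this is what lets me match the $\mathtt{W}_{\mathcal{S},0}$ appearing in the hypothesis against the $\mathtt{W}_{\mathcal{S},t}$ appearing in the statement of P-Integrity.

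First I would fix an arbitrary time $t \geq 0$ and an arbitrary $F \subset \mathcal{S}$ with $|F| = f$, so that $|\mathcal{S} \setminus F| = n - f$. Summing the hypothesis $\mathtt{W}_{s,t} > \frac{\mathtt{W}_{\mathcal{S},0}}{2(n-f)}$ over the $n-f$ servers of $\mathcal{S} \setminus F$ is the only computation needed:
\begin{equation*}
    \mathtt{W}_{\mathcal{S} \setminus F,t} = \sum_{s \in \mathcal{S} \setminus F} \mathtt{W}_{s,t} > (n-f)\cdot \frac{\mathtt{W}_{\mathcal{S},0}}{2(n-f)} = \frac{\mathtt{W}_{\mathcal{S},0}}{2}.
\end{equation*}
I would then rewrite the right-hand side as $\frac{\mathtt{W}_{\mathcal{S},t}}{2}$ using the invariance of the total weight, and finally apply Equation~\ref{eq:s-sf-f-1} in the form $\mathtt{W}_{F,t} = \mathtt{W}_{\mathcal{S},t} - \mathtt{W}_{\mathcal{S} \setminus F,t}$ to conclude $\mathtt{W}_{F,t} < \mathtt{W}_{\mathcal{S},t} - \frac{\mathtt{W}_{\mathcal{S},t}}{2} = \frac{\mathtt{W}_{\mathcal{S},t}}{2}$, which is exactly P-Integrity for this $F$ and $t$. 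Since $F$ and $t$ were arbitrary, P-Integrity holds at all times.

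There is no genuine obstacle in this lemma: the argument is a single summation followed by the complement identity of Equation~\ref{eq:s-sf-f-1}. The only point requiring care is being explicit that the hypothesis is phrased with the initial total $\mathtt{W}_{\mathcal{S},0}$ whereas P-Integrity is phrased with the current total $\mathtt{W}_{\mathcal{S},t}$; I must therefore justify substituting one for the other by appealing to the fact that pairwise transfers conserve total weight (equivalently, I could conclude directly from Inequality~\ref{eq:f-s}, which is stated as an equivalent reformulation of P-Integrity and already carries the bound $\mathtt{W}_{\mathcal{S} \setminus F,t} > \frac{\mathtt{W}_{\mathcal{S},t}}{2}$).
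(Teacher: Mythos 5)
Your proof is correct and follows essentially the same route as the paper's: both rely on the conservation of total weight under pairwise transfers, sum the per-server lower bound over the $n-f$ servers of $\mathcal{S}\setminus F$ to obtain $\mathtt{W}_{\mathcal{S}\setminus F,t} > \frac{\mathtt{W}_{\mathcal{S},0}}{2}$, and conclude via the equivalence of this bound (Inequality~\ref{eq:f-s}) with P-Integrity. Your extra step of explicitly unwinding the equivalence through Equation~\ref{eq:s-sf-f-1} is just a more spelled-out version of what the paper leaves implicit.
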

\begin{proof}
Recall that in the pairwise weight reassignment, the total weight of servers does not change during an execution, i.e., $\mathtt{W}_{\mathcal{S},t} = \mathtt{W}_{\mathcal{S},0}$ at any time $t>0$.
Also, recall that P-Integrity is equivalent to Inequality~\ref{eq:f-s}. 
Accordingly, we need to show that Inequality~\ref{eq:f-s} holds if $\mathtt{W}_{s,t} > \frac{\mathtt{W}_{\mathcal{S},0}}{2(n-f)}$ for each server $s$ at any time $t$.
We have:
\begin{align*}
    \mathtt{W}_{\mathcal{S} \setminus F,t} 
        &   = \sum_{s \in \mathcal{S} \setminus F} \mathtt{W}_{s,t} 
         > \sum_{s \in \mathcal{S} \setminus F} \frac{\mathtt{W}_{\mathcal{S},0}}{2(n-f)}
        \\& = |\mathcal{S} \setminus F| \times \frac{\mathtt{W}_{\mathcal{S},0}}{2(n-f)}
         = (n-f) \times \frac{\mathtt{W}_{\mathcal{S},0}}{2(n-f)} 
        \\& = \frac{\mathtt{W}_{\mathcal{S},0}}{2},
\end{align*}
which means that Inequality~\ref{eq:f-s} holds.
\end{proof}

\begin{lem}\label{lem:who-is-allowed}
    For each server $s$, there is at most one server that is allowed to invoke $\mathtt{transfer}(s,*,*)$ in order to preserve $\mathtt{W}_{s,t} > \frac{\mathtt{W}_{\mathcal{S},0}}{2(n-f)}$ at any time $t$ in asynchronous systems.
\end{lem}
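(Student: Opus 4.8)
The plan is to establish the contrapositive through a reduction to consensus, in the same spirit as the impossibility arguments of Sections~\ref{sec:impossibility} and~\ref{sec:weak}. Suppose, for contradiction, that two distinct servers $s_a$ and $s_b$ were both permitted to invoke $\mathtt{transfer}(s, *, *)$ while the system still guaranteed $\mathtt{W}_{s,t} > \frac{\mathtt{W}_{\mathcal{S},0}}{2(n-f)}$ at all times. I would use this permission to let $s_a$ and $s_b$ solve consensus, contradicting the impossibility of consensus in asynchronous failure-prone systems.

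First I would fix a reachable configuration at some time $t$ in which $s$ holds weight $\mathtt{W}_{s,t}$ and pick two amounts $\Delta_1, \Delta_2 > 0$ that each respect the threshold in isolation, $\mathtt{W}_{s,t} - \Delta_1 > \frac{\mathtt{W}_{\mathcal{S},0}}{2(n-f)}$ and $\mathtt{W}_{s,t} - \Delta_2 > \frac{\mathtt{W}_{\mathcal{S},0}}{2(n-f)}$, yet violate it jointly, $\mathtt{W}_{s,t} - \Delta_1 - \Delta_2 \leq \frac{\mathtt{W}_{\mathcal{S},0}}{2(n-f)}$; such amounts exist whenever $\mathtt{W}_{s,t}$ strictly exceeds the threshold. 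Then I would have each of $s_a, s_b$ write its proposal into a dedicated {\small SWMR} register and invoke $\mathtt{transfer}(s, \cdot, \Delta_1)$ and $\mathtt{transfer}(s, \cdot, \Delta_2)$ respectively, after which both repeatedly call $\mathtt{read\_changes}(s)$ and decide the proposal of whichever server produced a change with non-zero weight for $s$.

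Next I would verify the three consensus properties, the crux being that \emph{exactly} one transfer is effective. At most one can be effective, since two effective transfers would drive the weight of $s$ to or below the threshold. At least one must be effective: by the validity guarantee that a threshold-respecting transfer is never needlessly aborted, if $s_a$'s transfer completed as null then making it effective would have violated the threshold at that instant; but if $s_b$'s transfer is not effective, no weight has left $s$, so its weight is still $\mathtt{W}_{s,t}$ and $\mathtt{W}_{s,t} - \Delta_1 > \frac{\mathtt{W}_{\mathcal{S},0}}{2(n-f)}$ by construction---a contradiction. Agreement then follows because both servers observe the same unique effective change through $\mathtt{read\_changes}(s)$; Validity because the decision is one of the two proposals; and Termination from P-Liveness, under which both transfers and the reading loop complete.

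The hard part will be making the ``at least one is effective'' step airtight: one must track precisely which transfers have already taken effect at each completion instant and invoke the validity guarantee correctly. This mirrors the tension behind the earlier reductions---any protocol that could safely abort one of two individually valid transfers would thereby have agreed on which one to drop, which is precisely consensus---so the reduction closes and at most one server may be allowed to transfer weight from $s$.
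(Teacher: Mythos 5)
Your proof is correct and follows essentially the same route as the paper's: a reduction to consensus in which two servers concurrently invoke $\mathtt{transfer}(s,*,\Delta_1)$ and $\mathtt{transfer}(s,*,\Delta_2)$ with amounts that individually respect the threshold but jointly violate it, so that at most one transfer can be effective and the servers can decide on the proposal of the one whose transfer succeeded. Your treatment is in fact somewhat more careful than the paper's, which leaves the ``at least one transfer is effective'' step implicit.
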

\begin{proof}
    This proof is similar to the proof of Theorem~\ref{thm:cn:autonomous}.
    For contradiction, assume that $\mathtt{W}_{s,t} > \frac{\mathtt{W}_{\mathcal{S},0}}{2(n-f)}$ can be preserved at any time $t$ in asynchronous systems even if multiple servers invoke $\mathtt{transfer}(s,*,*)$.
    Particularly, assume that two correct servers $s_i,s_k\neq s_j$ invoke $\mathtt{transfer}(s_j, *, \Delta_1)$ and $\mathtt{transfer}(s_j, *, \Delta_2)$ at time $t$ such that $\mathtt{W}_{s_j,t}-\Delta_1 -\Delta_2 \leq \frac{\mathtt{W}_{\mathcal{S},0}}{2(n-f)}$ but $\mathtt{W}_{s_j,t}-\Delta_1 > \frac{\mathtt{W}_{\mathcal{S},0}}{2(n-f)}$ and $\mathtt{W}_{s_j,t}-\Delta_2 > \frac{\mathtt{W}_{\mathcal{S},0}}{2(n-f)}$.
    This means that only one of the transfers can be completed effectively, as if both transfers are completed effectively at time $t' > t$, then $\mathtt{W}_{s,t'} \leq \frac{\mathtt{W}_{\mathcal{S},0}}{2(n-f)}$.    
    Therefore, $s_i$ and $s_k$ can decide on the value proposed by a server $s \in \{ s_i,s_k \}$ that its transfer is completed effectively, which is a contradiction since consensus cannot be solved in asynchronous systems. 
    Consequently, in order to preserve $\mathtt{W}_{s,t} > \frac{\mathtt{W}_{\mathcal{S},0}}{2(n-f)}$ in asynchronous systems, we must assume that for each server $s_j$, there is at most one server that is allowed to invoke $\mathtt{transfer}(s_j, *, *)$.
\end{proof}

Without loss of generality, we assume that only $s_j$ can invoke $\mathtt{transfer}(s_j, *, *)$ in Lemma~\ref{lem:who-is-allowed}.
Using Lemmas~\ref{lem:wmin-p-integrity} and \ref{lem:who-is-allowed}, the proof of Theorem~\ref{thm:ensure-transfer} is immediate.


\vspace{0.5em}
\noindent\textbf{Proof of Theorem~\ref{thm:rpwr:satisfy-properties}}.
We show that Algorithms~\ref{alg:read-changes} and \ref{alg:transfer} satisfy all properties of the restricted pairwise weight reassignment.
\begin{itemize}
    \item (RP-Integrity) Consider a server $s$ with a weight greater than $\Delta + \frac{\mathtt{W}_{\mathcal{S},0}}{2(n-f)}$.
    Assume that $s$ invokes $\mathtt{transfer}(s, *, \Delta)$ at time $t>0$.
    Since processes are sequential, it follows that $s$ cannot have any incomplete transfer at that time.
    When the transfer completes, $s$ adds a change $\langle s, \mathit{lc}, s, -\Delta \rangle$ to its local set $\mathcal{C}$, that contains the completed changes.
    As a consequence of adding such a change to $\mathcal{C}$, the weight of $s$ decreases by $\Delta$.
    Notice that before the transfer, the weight of $s$ was greater than $\Delta + \frac{\mathtt{W}_{\mathcal{S},0}}{2(n-f)}$, so its weight remains greater than $\frac{\mathtt{W}_{\mathcal{S},0}}{2(n-f)}$ when the transfer is completed.
    Since only $s$ can decrease its weight, it follows that RP-Integrity is always satisfied.
        
    \item (RP-Validity-I) Using Algorithm~\ref{alg:transfer}, a server $s$ can invoke $\mathtt{transfer}(s, *, \Delta)$ only when its weight is greater than $\Delta + \frac{\mathtt{W}_{\mathcal{S},0}}{2(n-f)}$.
    If it is the case, two changes will be created: $\langle s, \mathit{lc}, s, -\Delta \rangle$ and $\langle s, \mathit{lc}, *, \Delta \rangle$, and a message $\langle \mathtt{Complete}, \langle s, \mathit{lc}, s, -\Delta \rangle \rangle$ will be returned.
    Otherwise, a message $\langle \mathtt{Complete}, \langle s, \mathit{lc}, s, 0 \rangle \rangle$ will be returned without creating any change (zero-weight changes do not change the weights of servers, so it is not required to store them.)
    Consequently, RP-Validity-I is preserved.
    
    \item (RP-Validity-II)
        Assume that $\mathtt{read\_changes}(s)$ is invoked by a process $p$ (Algorithm~\ref{alg:read-changes}), where $s$ is a server, and $p$ receives a set $\mathcal{C}$ at time $t$ as the returned value.
        Further, assume that there is a change $c \in \mathcal{C}$ that is completed, i.e., $c \in \mathcal{C}_{s,t}$.
        We need to show that if any process $q$ that invokes $\mathtt{read\_changes}(s)$ after time $t$ will receive a set that contains $c$.

        For the sake of contradiction, assume that $q$ invokes $\mathtt{read\_changes}(s)$ after time $t$ and receives a set $\mathcal{C}'$ that does not contain $c$.
        When $q$ invokes $\mathtt{read\_changes}(s)$, the changes stored by at least $f+1$ servers must be collected (lines \ref{line:rc4}-\ref{line:rc7}).
        Since $c\notin \mathcal{C}'$, it is not stored by at least $f+1$ servers. 
        However, we know that before returning a value, it must be stored by at least $n-f$ servers in Algorithm~\ref{alg:read-changes} (line \ref{line:rc9}), and since $c\in \mathcal{C}$, it is stored by at least $n-f$ servers, which is a contradiction because there are $n$ servers in the system.
        
    \item (RP-Liveness) 
    Since at most $f$ servers might fail, there are $n - f$ correct servers in the worst case.
    Note that Algorithms~\ref{alg:read-changes} and \ref{alg:transfer} require no more than $n - f$ correct servers.
    Consequently, RP-Liveness holds.
\end{itemize}

\vspace{0.5em}
\noindent\textbf{Proof of Theorem~\ref{thm:rpwr:implementation}}.
According to Theorem~\ref{thm:rpwr:satisfy-properties}, Algorithms~\ref{alg:read-changes} and \ref{alg:transfer} implement restricted pairwise weight reassignment.
Both of these algorithms use primitives that can be implemented in asynchronous systems.
Also, operations in those algorithms require to contact with at most $n-f$ servers.
Since there are $n-f$ correct servers during an execution, any operation will eventually terminate.
Besides, Theorem~\ref{thm:atomic-storage-correctness} shows that any invocation of $\mathtt{read}$ operation will terminate eventually.
Consequently, any operation ($\mathtt{transfer}$ or $\mathtt{read\_changes}$) executed by a process can terminate eventually without waiting for the responses of other invocations of $\mathtt{transfer}$ or $\mathtt{read\_changes}$ operations, meaning that both Algorithms~\ref{alg:read-changes} and \ref{alg:transfer} can be implemented in asynchronous failure-prone systems.
It follows that restricted pairwise weight reassignment can be implemented in asynchronous failure-prone systems. 

\vspace{0.5em}
\noindent\textbf{Reader-writer side of dynamic-weighted atomic storage.}
Algorithm~\ref{alg:client} is the pseudo-code of the $\mathtt{read}{/}\mathtt{write}$ protocols.
The $\mathtt{read}{/}\mathtt{write}$ protocols is similar to the $\mathtt{read}{/}\mathtt{write}$ protocols of the ABD protocol~\cite{abd} with only one difference: each reader or writer, after receiving messages from a set $Q \subseteq \mathcal{S}$ to decide whether a quorum is constituted, calls function $\mathtt{is\_quorum}$ (lines~\ref{alg:clien-line:isquorum-read} and \ref{alg:clien-line:isquorum-write} of Algorithm~\ref{alg:client}).

\begin{algorithm}[t!]
\caption{The reader-writer side of the $\mathtt{read}{/}\mathtt{write}$ protocols - process $p_{i}$.}
\label{alg:client}
\begin{algorithmic}[1]
    \STATEx{\hspace{-1.65em}\textbf{variables}}
        \STATE{$opCnt \leftarrow 0$}
        \STATE{$\mathcal{C} \leftarrow \{ \langle s, 1, s, 1 \rangle \ | \ \forall s \in \mathcal{S} \}$}
        
    \vspace{0.5em}
    \STATEx{\hspace{-1.65em}\textbf{functions}}
        \STATE{$\mathtt{read}() \equiv \mathtt{read\_write}(\perp)$}
        \STATE{$\mathtt{write}(value) \equiv \mathtt{read\_write}(value)$}
            
    \vspace{0.5em}
    \STATEx{\hspace{-1.65em}\textbf{function} $\mathsf{is\_quorum}( Q )$}  
        \NoThen
        \IF{$\frac{\mathtt{W}_{\mathcal{S},0}}{2} < sum\big(\{ \mathtt{W}_{s_i,*} \ | s_{i} \in Q\}\big)$}                                                         \label{alg:basics:common:line:start-quorum-constitution}
            \STATE{\textbf{return} $\mathit{yes}$}
        \ELSE
            \STATE{\textbf{return} $\mathit{no}$}                                                                                 \label{alg:basics:common:line:end-quorum-constitution}
        \ENDIF 

    \vspace{0.5em}
    \STATEx{\hspace{-1.65em}\textbf{function} $\mathtt{read\_write}(value)$}
        \STATEx{\hspace{-1.3em}$\mathtt{phase1}$}
        \STATE{$opCnt \leftarrow opCnt + 1$}
        \STATE{\textbf{send} $\langle \mathtt{R}, opCnt \rangle$ to all servers} 
        \STATE{$Q \leftarrow  \emptyset$}
        \REPEAT
            \STATE{\textbf{upon receipt of} $\langle \mathtt{R\_A}, reg, opCnt, \mathcal{C}'\rangle$ from $s_{i}$}
            \Indent
                \vspace{-0.1cm}
                \IF{$\mathcal{C}\neq\mathcal{C}'$}
                    \STATE{$\mathcal{C} \leftarrow \mathcal{C}'$}
                    \STATE{$\mathtt{read\_write}(value)$ \hfill $\triangleright$ restart the operation} 
                \ENDIF
                \STATE{$Q \leftarrow Q \cup s_{i}.\langle reg, \mathcal{C} \rangle $}
            \EndIndent
        \UNTIL{$\mathtt{is\_quorum}(Q)$}\label{alg:clien-line:isquorum-read}
        \STATE{$maxtag \leftarrow \mathtt{max}\big(\{s_{i}.reg.\mathit{tag} \ | \ s_{i} \in Q \}\big)$}
        \STATE{$maxreg \leftarrow \mathtt{find}\big(\{s_{i}.reg \ | \ s_{i} \in Q \textbf{ and } $ }
        \STATEx{$\quad s_{i}.reg.\mathit{tag} = maxtag \}\big)$}
        \IF{$value = \perp$}
            \STATE{$value \leftarrow maxreg.value$}
        \ELSE
            \STATE{$ts \leftarrow maxtag.ts + 1$}
            \STATE{$\mathit{pid} \leftarrow p_{i}$}
        \ENDIF

        \STATEx{\hspace{-1.3em}$\mathtt{phase2}$}
        \STATE{\textbf{send} $\langle \mathtt{W}, \langle \langle ts,\mathit{pid} \rangle, value \rangle, opCnt \rangle$ to all servers}
        \STATE{$Q \leftarrow  \emptyset$}
        \REPEAT
            \STATE{\textbf{upon receipt of} $\langle \mathtt{W\_A}, reg, opCnt, \mathcal{C}'\rangle$ from $s_{i}$}
            \Indent
                \vspace{-0.1cm}
                \IF{$\mathcal{C}\neq\mathcal{C}'$}
                    \STATE{$\mathcal{C} \leftarrow \mathcal{C}'$}
                    \STATE{$\mathtt{read\_write}(value)$ \hfill $\triangleright$ restart the operation} 
                \ENDIF
                \STATE{$Q \leftarrow S \cup s_{i}.\langle reg, \mathcal{C} \rangle $}
            \EndIndent
        \UNTIL{$\mathtt{is\_quorum}(Q)$}\label{alg:clien-line:isquorum-write}
        \STATE{\textbf{return} $value$}
\end{algorithmic}
\end{algorithm}

\vspace{0.4em}
\noindent\textbf{Server side of dynamic-weighted atomic storage.}
Algorithm~\ref{alg:rw:server} is the pseudo-code of the servers' algorithm.
The servers' algorithm is similar to the servers' algorithm of the ABD protocol with only one difference: each server includes its set of changes, $\mathcal{C}$, to its responses.
                
\begin{algorithm}[t!]
\caption{The server side of the $\mathtt{read}{/}\mathtt{write}$ protocols - server $s_{i}$.}
\label{alg:rw:server}
\begin{algorithmic}[1]
    \STATEx{\hspace{-1.65em}\textbf{upon receipt of} $\langle \mathtt{R}, cnt \rangle$ from $p$}
        \STATE{\textbf{send} $\langle \mathtt{R\_A}, register, cnt, \mathcal{C}\rangle$ to $p$} \label{alg:basics:server:line:start-read}
    \vspace{0.5em}
    \STATEx{\hspace{-1.65em}\textbf{upon receipt of} $\langle \mathtt{W}, \langle \mathit{tag}, \mathit{val} \rangle, cnt\rangle$ from $p$}
        \NoThen
        \IF{$register.\mathit{tag} < \mathit{tag}$}
            \STATE{$register \leftarrow \langle \mathit{tag}, \mathit{val} \rangle$}
        \ENDIF
        \STATE{\textbf{send} $\langle \mathtt{W\_A}, cnt, \mathcal{C} \rangle$ to $p$}\label{alg:basics:server:line:end-write}
\end{algorithmic}
\end{algorithm}

\vspace{0.5em}
\noindent\textbf{Correctness}.
We show that our dynamic-weighted atomic storage  satisfies the safety and liveness properties of an atomic register according to the following definition:
\begin{definition}[Atomic register \cite{interprocessComII}]\label{def:atomic-register}
    Assume two read operations $r_{1}$ and $r_{2}$ executed by correct processes.
    Consider that $r_{1}$ terminates before $r_{2}$ initiates.
    If $r_{1}$ reads a value $\alpha$ from register $R$,
        then either $r_{2}$ reads $\alpha$ or $r_{2}$ reads a more up-to-date value than $\alpha$.
\end{definition}

\begin{lem}\label{lem:no-change}
    Assume there is no transfer from time $t_{1}$ to time $t_{2}$.
    Also, assume that set $S_{1} \subseteq \mathcal{S}$ (resp. $S_{2} \subseteq \mathcal{S}$) is determined as a quorum by function $\mathtt{is\_quorum}$ such that   every server $s_{1} \in S_{1}$ (resp. every server $s_{2} \in S_{2}$) is contacted from $t_{1}$ to $t_{2}$.
    Then, two quorums $S_{1}$ and $S_{2}$ have a non-empty intersection, i.e., $S_{1} \cap S_{2} \neq \emptyset$.
\end{lem}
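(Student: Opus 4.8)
The plan is to argue by contradiction through a simple weight-counting argument, exactly as in the classical intersection proof for majority quorums, but exploiting the fact that weights are frozen during the interval $[t_1,t_2]$. First I would fix a reference time, say $t_1$, and observe that because no $\mathtt{transfer}$ is effective between $t_1$ and $t_2$, every server's weight is constant throughout the interval; hence $\mathtt{W}_{s,t} = \mathtt{W}_{s,t_1}$ for each server $s$ and each $t \in [t_1,t_2]$. This is precisely what the ``no transfer'' hypothesis buys us: since each server of $S_1$ (resp.\ $S_2$) is contacted somewhere inside $[t_1,t_2]$, the weight that $\mathtt{is\_quorum}$ sums for that server coincides with its weight at the common time $t_1$, so the two quorum evaluations are taken against one and the same weight snapshot.

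Next I would recall that pairwise reassignment keeps the total voting power constant, i.e.\ $\mathtt{W}_{\mathcal{S},t} = \mathtt{W}_{\mathcal{S},0}$ for all $t$ (as used in the proof of Lemma~\ref{lem:wmin-p-integrity}). Writing $W \triangleq \mathtt{W}_{\mathcal{S},0}$, the definition of $\mathtt{is\_quorum}$ (line~\ref{alg:basics:common:line:start-quorum-constitution} of Algorithm~\ref{alg:client}), applied to the snapshot at $t_1$, gives
\begin{equation*}
    \mathtt{W}_{S_1,t_1} > \frac{W}{2} \quad\text{and}\quad \mathtt{W}_{S_2,t_1} > \frac{W}{2}.
\end{equation*}

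Finally I would suppose, for contradiction, that $S_1 \cap S_2 = \emptyset$. Then, by additivity over the disjoint union, $\mathtt{W}_{S_1 \cup S_2, t_1} = \mathtt{W}_{S_1,t_1} + \mathtt{W}_{S_2,t_1} > W$. On the other hand, RP-Integrity (Definition~\ref{def:rpwr}) guarantees that every server weight is strictly positive, so the weight of any subset cannot exceed the weight of $\mathcal{S}$; in particular $\mathtt{W}_{S_1 \cup S_2,t_1} \le \mathtt{W}_{\mathcal{S},t_1} = W$, a contradiction. Hence $S_1 \cap S_2 \neq \emptyset$. I expect the only delicate point to be the first paragraph: one must be careful that $\mathtt{is\_quorum}$ evaluates each server's weight at its (possibly distinct) contact time, and argue that the no-transfer assumption makes all of these evaluations agree with a single snapshot taken at $t_1$; once that is settled, the counting itself is routine.
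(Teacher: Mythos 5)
Your proof is correct and follows essentially the same counting argument as the paper: two disjoint sets each of weight exceeding $\frac{\mathtt{W}_{\mathcal{S},0}}{2}$ would jointly exceed the constant total weight $\mathtt{W}_{\mathcal{S},0}$, a contradiction. You are in fact somewhat more careful than the paper's own proof, which leaves implicit both the point that the no-transfer hypothesis makes all the weight evaluations agree with a single snapshot and the fact that positivity of weights (RP-Integrity) is needed to bound $\mathtt{W}_{S_1 \cup S_2, t_1}$ by $\mathtt{W}_{\mathcal{S}, t_1}$.
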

\begin{proof}
    A set $S' \subseteq \mathcal{S}$ is determined as a quorum by function $\mathtt{is\_quorum}$ if the total weight of servers in $S'$ is greater than $\frac{\mathtt{W}_{\mathcal{S},0}}{2}$.
    For contradiction, assume that $S_{1} \cap S_{2} = \emptyset$.
    Sets $S_{1}$ and $S_{2}$ are determined as quorums.
    Let $tw_{1}$ and $tw_{2}$ be equal to the total weight of servers in $S_{1}$ and $S_{2}$, respectively.
    We have:
    \begin{align*}
        \begin{cases}
            \frac{\mathtt{W}_{\mathcal{S},0}}{2} < tw_{1} \\
            \frac{\mathtt{W}_{\mathcal{S},0}}{2} < tw_{2}
        \end{cases}
        \Rightarrow \mathtt{W}_{\mathcal{S},0} < tw_{1} + tw_{2}
    \end{align*}
    Since $S_{1} \cap S_{2} = \emptyset$, the total weight of servers $\{S_{1} \cup S_{2}\}$ is equal to $tw_{1} + tw_{2} > \mathtt{W}_{\mathcal{S},0}$; on the other hand, we know that $\{S_{1} \cup S_{2}\} \subseteq \mathcal{S}$ and the total weight of servers $\mathcal{S}$ is equal to $\mathtt{W}_{\mathcal{S},0}$; hence, we find a contradiction. 
\end{proof}

\begin{lem}\label{lem:one-change}
    Assume that a $\mathtt{read}$ operation $r_{1}$ returns $\langle \mathit{tag}_{\alpha}, \alpha \rangle$ at time $t_{\alpha}^{e}$.
    Also, assume that another $\mathtt{read}$ operation $r_{2}$ started at time $t_{\beta}^{s} > t_{\alpha}^{e}$ returns $\langle \mathit{tag}_{\beta}, \beta \rangle$.
    If there is only one transfer at time $t$ such that   $t_{\alpha}^{e} < t < t_{\beta}^{s}$, one of the following cases happen: (1) $\mathit{tag}_{\alpha} = \mathit{tag}_{\beta}$ and $\alpha = \beta$, or (2) $\mathit{tag}_{\alpha} \leq \mathit{tag}_{\beta}$ and $\beta$ was written after $\alpha$.
\end{lem}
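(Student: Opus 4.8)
The plan is to reduce the statement to the classical ABD read-monotonicity argument and to isolate the single place where the weight change could break quorum intersection. Since $r_1$ returns $\langle \mathit{tag}_\alpha,\alpha\rangle$, its write-back phase ($\mathtt{phase2}$) stored $\langle \mathit{tag}_\alpha,\alpha\rangle$ at every server of some set $Q_1$ accepted by $\mathtt{is\_quorum}$, and every server of $Q_1$ reported the same configuration $\mathcal{C}_1$. As $r_1$ ends at $t_\alpha^e$, which precedes the unique transfer at $t$, this is the configuration in force before the transfer; I call it $\mathcal{C}^{old}$ and let $\mathcal{C}^{new}$ be the configuration obtained by applying the single change of weight $\Delta$ from the loser $s_a$ to the gainer $s_b$. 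Because $\mathtt{is\_quorum}$ only counts servers that returned the reader's own $\mathcal{C}$, the $\mathtt{phase1}$ quorum $Q_2$ of $r_2$ is certified under one configuration, and since at most one transfer occurred before $r_2$ starts, that configuration is either $\mathcal{C}^{old}$ or $\mathcal{C}^{new}$.

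First I would dispatch the easy configurations. If $Q_2$ is certified under $\mathcal{C}^{old}$, then $Q_1$ and $Q_2$ are two quorums of the same configuration, so the counting in the proof of Lemma~\ref{lem:no-change} gives $Q_1\cap Q_2\neq\emptyset$. Any $s^{*}\in Q_1\cap Q_2$ stored $\langle \mathit{tag}_\alpha,\alpha\rangle$ during $r_1$'s write-back (hence by $t_\alpha^e$) and is read by $r_2$ only after $t_\beta^s>t_\alpha^e$; as server registers never decrease their tag, $r_2$ observes a tag at least $\mathit{tag}_\alpha$, so $\mathit{tag}_\beta\geq \mathit{tag}_\alpha$. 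The same conclusion holds when $Q_2$ is certified under $\mathcal{C}^{new}$ but still $Q_1\cap Q_2\neq\emptyset$.

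The crux is the disjoint case $Q_1\cap Q_2=\emptyset$ with $Q_2$ under $\mathcal{C}^{new}$. Adding the two quorum inequalities and rewriting $Q_1$'s sum in the new configuration gives
\begin{equation*}
\mathtt{W}_{Q_1\cup Q_2,t}+\Delta\big(\mathbb{1}[s_a\in Q_1]-\mathbb{1}[s_b\in Q_1]\big)>\mathtt{W}_{\mathcal{S},0},
\end{equation*}
where $\mathtt{W}_{Q_1\cup Q_2,t}$ is taken in $\mathcal{C}^{new}$. Since $\mathtt{W}_{Q_1\cup Q_2,t}\leq \mathtt{W}_{\mathcal{S},0}$, disjointness forces $s_a\in Q_1$, and the symmetric rewriting in $\mathcal{C}^{old}$ forces the \emph{gainer} $s_b\in Q_2$. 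Here I would invoke the read that $s_b$ performs inside $\mathtt{write\_changes}$ precisely when it is the recipient of a change (the branch $i=k$ at line~\ref{line:read}): this read is executed \emph{before} the change is added to $\mathcal{C}$, so it runs under $\mathcal{C}^{old}$, and its own $\mathtt{phase1}$ quorum therefore intersects $Q_1$ by the same counting. As the whole transfer occurs at $t>t_\alpha^e$, the intersecting server already held $\langle \mathit{tag}_\alpha,\alpha\rangle$, so $s_b$ adopts a tag at least $\mathit{tag}_\alpha$. Crucially, $s_b$ reports $\mathcal{C}^{new}$ to $r_2$ (the reason $r_2$ counted it in the $\mathcal{C}^{new}$-quorum $Q_2$) only after adding the change, i.e.\ only after completing this internal read; by register monotonicity the tag $s_b$ returns to $r_2$ is still at least $\mathit{tag}_\alpha$, so again $\mathit{tag}_\beta\geq\mathit{tag}_\alpha$.

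Having established $\mathit{tag}_\beta\geq \mathit{tag}_\alpha$ in every case, I would finish with the standard tag bookkeeping. Each write fixes a tag $\langle ts,\mathit{pid}\rangle$ by incrementing the largest timestamp it reads and stamping it with the writer's identifier, and since processes are sequential no two distinct writes use the same tag; hence $\mathit{tag}_\beta=\mathit{tag}_\alpha$ implies $\beta=\alpha$ (case~1), while $\mathit{tag}_\alpha<\mathit{tag}_\beta$ implies $\beta$ carries a strictly larger tag and was therefore written after $\alpha$ (case~2). The main obstacle is exactly this disjoint subcase: pinning down, via the weight counting, that it can arise only with the gainer $s_b$ in $Q_2$, and then arguing the ordering of operations in $\mathtt{write\_changes}$ so that $s_b$'s internal read necessarily captured $\mathit{tag}_\alpha$ before $s_b$ could be counted in a new-configuration quorum.
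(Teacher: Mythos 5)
Your proof is correct and follows the same route as the paper's: identify the weight gainer as the only server that can give rise to a new, possibly disjoint quorum, and use the internal $\mathtt{read}()$ it performs inside $\mathtt{write\_changes}$ (lines~\ref{line:if-read}--\ref{line:read}) \emph{before} adopting the change to chain the quorum intersection back to $r_1$'s write-back quorum. The paper compresses all of this into a one-sentence reduction to Lemma~\ref{lem:no-change}; your explicit weight-counting in the disjoint case, which pins the loser to $Q_1$ and the gainer to $Q_2$, is precisely the detail that reduction leaves implicit.
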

\begin{proof}
    Without loss of generality, assume that the transfer increases the weight of a server $s_{i}$ and decreases the weight of a server $s_j$.
    Since there is only one transfer, the weights of other servers are not reassigned.
    Reassigning the weights of $s_{i}$ and $s_j$ is the only factor that causes the constitution of new quorums; since before accomplishing such a transfer, server $s_{i}$ executes a $\mathtt{read}$ operation to update its register (Algorithm~\ref{alg:transfer}, lines \ref{line:if-read}-\ref{line:read}), this lemma is similar to Lemma~\ref{lem:no-change}.
\end{proof}

\begin{lem}\label{lem:several-change}
    Assume that a $\mathtt{read}$ operation $r_{1}$ returns $\langle \mathit{tag}_{\alpha}, \alpha \rangle$ at time $t_{\alpha}^{e}$.
    Also, assume that another $\mathtt{read}$ operation $r_{2}$ started at time $t_{\beta}^{s} > t_{\alpha}^{e}$ returns $\langle \mathit{tag}_{\beta}, \beta \rangle$.
    If there is at least one transfer at time $t$ such that   $t_{\alpha}^{e} < t < t_{\beta}^{s}$, one of the following cases happen: (1) $\mathit{tag}_{\alpha} = \mathit{tag}_{\beta}$ and $\alpha = \beta$, or (2) $\mathit{tag}_{\alpha} \leq \mathit{tag}_{\beta}$ and $\beta$ was written after $\alpha$.
\end{lem}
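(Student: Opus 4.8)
The plan is to prove the lemma by induction on the number $k \geq 1$ of transfers that occur in the interval $(t_\alpha^e, t_\beta^s)$. The base case $k = 1$ is exactly Lemma~\ref{lem:one-change}, so nothing is left to do there. For the inductive step I would reduce a run of $k+1$ transfers to one transfer preceded by $k$ earlier ones, exploiting the fact that the recipient of a transfer performs a complete $\mathtt{read}$ before the transfer takes effect (Algorithm~\ref{alg:transfer}, lines~\ref{line:if-read}--\ref{line:read}); this read carries the latest tag across each weight reassignment, which is precisely what makes the chain of quorum-intersection arguments go through.

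Concretely, suppose the claim holds whenever two reads are separated by $k$ transfers, and let the $k+1$ transfers lie at times $t^{(1)} < \dots < t^{(k+1)}$ in $(t_\alpha^e, t_\beta^s)$. Let $s$ be the server whose weight is increased by the last transfer $t^{(k+1)}$; while processing it, $s$ executes a complete $\mathtt{read}$ operation $\rho$ (line~\ref{line:read}) before inserting the incoming change into its set $\mathcal{C}$ (line~\ref{line:10}). Write $\langle \mathit{tag}_\gamma, \gamma \rangle$ for the pair returned by $\rho$. Between $r_1$ and $\rho$ only the $k$ transfers $t^{(1)}, \dots, t^{(k)}$ occur, so the induction hypothesis gives $\mathit{tag}_\alpha \leq \mathit{tag}_\gamma$ with $\gamma$ written no earlier than $\alpha$ (and $\gamma = \alpha$ when the tags coincide). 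Between $\rho$ and $r_2$ only the single transfer $t^{(k+1)}$ occurs, so Lemma~\ref{lem:one-change} applied to the pair $(\rho, r_2)$ gives $\mathit{tag}_\gamma \leq \mathit{tag}_\beta$ with $\beta$ written no earlier than $\gamma$. Chaining these two facts and using transitivity of the tag order and of the ``written after'' relation yields the two stated cases.

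The step I expect to be the main obstacle is justifying the clean split of the transfer count into $k$ and $1$, which rests on the temporal placement of $\rho$: I must argue that $\rho$ contacts its quorums strictly after $r_1$ ends and strictly before $t^{(k+1)}$ alters the quorum structure, so that the first $k$ transfers fall in the $(r_1, \rho)$ window while only $t^{(k+1)}$ falls in the $(\rho, r_2)$ window. This follows from the order of statements in $\mathtt{write\_changes}$---the recipient performs $\mathtt{read}$ at line~\ref{line:read} before adding the weight-increasing change at line~\ref{line:10}---which is exactly the property already used in Lemma~\ref{lem:one-change} to treat a single transfer as though no reassignment had occurred. Once this placement is established, the remaining bookkeeping (transitivity of $\leq$ on tags together with the write ordering) is routine.
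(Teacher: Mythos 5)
Your induction on the number of transfers, using Lemma~\ref{lem:one-change} as both the base case and the single-step engine (with the recipient's intermediate $\mathtt{read}$ at line~\ref{line:read} serving as the bridging read), is exactly the argument the paper has in mind: its proof consists of the single sentence that the lemma ``is straightforward using Lemma~\ref{lem:one-change}.'' Your write-up simply makes explicit the iteration the paper leaves implicit, so the approach is essentially identical and correct at the paper's own level of rigor.
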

\begin{proof}
    This lemma is straightforward using Lemma~\ref{lem:one-change}.
\end{proof}

\begin{lem}\label{lem:many-change}
    Assume that a $\mathtt{read}$ operation $r_{1}$ returns a value $\alpha_{1}$ at time $t_{1}^{e}$ with an associated tag $\mathit{tag}_{1}$.
    Also, assume that another $\mathtt{read}$ operation $r_{2}$ started at time $t_{2}^{s}>t_{1}^{e}$ returns a value $\alpha_{2}$ associated with tag $\mathit{tag}_{2}$.
    Then, one of the following cases happens: (1) $\mathit{tag}_{1} = \mathit{tag}_{2}$ and $\alpha_{1} = \alpha_{2}$, or (2) $\mathit{tag}_{1} \leq \mathit{tag}_{2}$ and $\alpha_{2}$ was written after $\alpha_{1}$.
\end{lem}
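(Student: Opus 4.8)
The plan is to reduce Lemma~\ref{lem:many-change} to the two regimes already analyzed in the preceding lemmas, splitting on whether any $\mathtt{transfer}$ completes during the interval $(t_1^e, t_2^s)$ that separates the two reads.

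First I would treat the case in which no transfer completes between $t_1^e$ and $t_2^s$. Here the weight configuration $\mathcal{C}$ is unchanged across both operations, so Lemma~\ref{lem:no-change} applies directly: any two quorums determined by $\mathtt{is\_quorum}$ during this interval have a non-empty intersection. In particular, the second-phase write-back quorum of $r_1$---at whose servers the tuple $\langle \mathit{tag}_1, \alpha_1 \rangle$ is installed before $r_1$ returns at $t_1^e$---intersects the first-phase quorum that $r_2$ collects after $t_2^s$. Since servers never decrease their stored tag (Algorithm~\ref{alg:rw:server}), every server in this intersection reports to $r_2$ a tag no smaller than $\mathit{tag}_1$; hence the highest tag observed by $r_2$ satisfies $\mathit{tag}_1 \leq \mathit{tag}_2$. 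The standard ABD tag-ordering argument then yields either $\mathit{tag}_1 = \mathit{tag}_2$ with $\alpha_1 = \alpha_2$, or $\mathit{tag}_1 < \mathit{tag}_2$ with $\alpha_2$ written after $\alpha_1$, which is exactly the desired conclusion.

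Next I would dispatch the complementary case, in which at least one transfer completes in $(t_1^e, t_2^s)$. This is precisely the hypothesis of Lemma~\ref{lem:several-change}, whose conclusion is identical to the two cases of Lemma~\ref{lem:many-change}; the result therefore follows immediately by invoking that lemma. Combining the two cases establishes the statement, and this in turn matches the monotonicity required by Definition~\ref{def:atomic-register}.

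The only genuine subtlety, and the step I expect to be the main obstacle, is the bridge in the no-transfer case from Lemma~\ref{lem:no-change}---phrased purely in terms of quorum intersection---to the tag conclusion. This requires pinning down that $r_1$'s write-back phase installs $\langle \mathit{tag}_1, \alpha_1 \rangle$ at every server it contacts and that the monotonicity of stored tags guarantees an intersecting server still holds a tag $\geq \mathit{tag}_1$ when $r_2$ reads it. Once this monotonicity is made explicit, both cases close and the lemma follows.
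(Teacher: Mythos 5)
Your two-way split---on whether a transfer completes in the gap $(t_1^e, t_2^s)$---is exhaustive as a partition of executions, and your second branch correctly matches the hypothesis of Lemma~\ref{lem:several-change}. The gap is in the first branch: from ``no transfer completes between $t_1^e$ and $t_2^s$'' you conclude that ``the weight configuration $\mathcal{C}$ is unchanged across both operations,'' but that does not follow. A transfer may complete \emph{concurrently} with $r_1$ (during $(t_1^s, t_1^e)$) or with $r_2$ (during $(t_2^s, t_2^e)$) while still leaving the separating interval transfer-free. In that situation the two phases of a single read, or the quorums used by $r_1$ and $r_2$, can be evaluated against different weight configurations, so Lemma~\ref{lem:no-change} (whose hypothesis requires that \emph{every} contacted server is contacted within a transfer-free window) does not apply directly, and your quorum-intersection argument loses its footing. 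The paper's proof treats exactly these situations as separate cases: case (b), a transfer concurrent with $r_1$, and case (d), a transfer concurrent with $r_2$, both discharged via Lemma~\ref{lem:one-change}, which relies on the fact that the server gaining weight first refreshes its register (lines~\ref{line:if-read}--\ref{line:read} of Algorithm~\ref{alg:transfer}) so that newly constituted quorums still carry the latest written value. To close your argument you would need to either add these two concurrent-transfer cases or strengthen your first branch's hypothesis to ``no transfer completes in $[t_1^s, t_2^e]$'' and then account separately for the remaining executions.

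On the positive side, your elaboration of the no-transfer case---write-back quorum of $r_1$ intersecting the first-phase quorum of $r_2$, tag monotonicity at servers, and the standard ABD ordering---is more explicit than the paper's one-line appeal to Lemma~\ref{lem:no-change}, and is the right way to bridge from pure quorum intersection to the tag conclusion.
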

\begin{proof}
    Assume that $r_{1}$ starts at time $t_{1}^{s}$ and $r_{2}$ ends at time $t_{2}^{e}$, then $t_{1}^{s} < t_{1}^{e} < t_{2}^{s} < t_{2}^{e}$.
    There are four mutually exclusive cases:
    \begin{enumerate}[label=\alph*)]
        \item There is no concurrent transfer.
        For this case, the lemma holds according to Lemma~\ref{lem:no-change}.
        
        \item There is a concurrent transfer with $r_{1}$ at time $t'$ such that   $t_{1}^{s}< t' < t_{1}^{e}$.
        For this case, the lemma holds according to Lemma~\ref{lem:one-change}.
        
        \item There is at least one transfer between $r_{1}$ and $r_{2}$.
        For this case, the lemma holds according to Lemma~\ref{lem:several-change}.
        
        \item There is a concurrent transfer with $r_{2}$ at time $t'$ such that   $t_{2}^{s}< t' < t_{2}^{e}$.
        This case is similar to Case 2.       
    \end{enumerate}
\end{proof}

\vspace{0.4em}
\noindent\textbf{Proof of Theorem~\ref{thm:atomic-storage-correctness}}.
    Without loss of generality, we assume that this operation is invoked a finite number of times, like the asynchronous reconfiguration problem \cite{ReconfLatticeAgreement, spiegelman2017liveness}.
    According to Lemma~\ref{lem:many-change}, the storage system implemented using the $\mathtt{read}{/}\mathtt{write}$ protocols (Algorithm~\ref{alg:client} and Algorithm~\ref{alg:rw:server}) is atomic storage.
    Since at most $f$ servers might fail, there are $n - f$ servers in the worst case.
    The minimum value for the total weight of $n - f$ servers is greater than $\frac{\mathtt{W}_{\mathcal{S},0}}{2}$.
    Consequently, a quorum can be constituted, i.e., the system remains live even in the worst-case scenario.
\end{document}